\documentclass[11pt]{article}%
\usepackage{bbm,url,microtype,amssymb,amsthm,mathtools,mathrsfs,graphicx,float,color}
\usepackage{fullpage}
\usepackage{lmodern}
\usepackage[T1]{fontenc}
\usepackage[USenglish]{babel}
\usepackage{authblk}
\usepackage{booktabs}
\usepackage[hidelinks]{hyperref}
\usepackage{cleveref}
\usepackage{bm}
\usepackage{placeins}

\providecommand{\U}[1]{\protect\rule{.1in}{.1in}}
\newtheorem{thm}{Theorem}\crefname{thm}{Theorem}{Theorems}
\newtheorem{lem}[thm]{Lemma}\crefname{lem}{Lemma}{Lemmas}
\crefname{prp}{Proposition}{Propositions}
\crefname{cor}{Corollary}{Corollaries}
\crefname{prb}{Problem}{Problems}
\newtheorem{dfn}[thm]{Definition}\crefname{dfn}{Definition}{Definitions}
\crefname{obs}{Observation}{Observations}
\crefname{section}{Section}{Sections}
\crefname{appendix}{Appendix}{Appendices}
\numberwithin{equation}{section}
\let\oldref\ref
\renewcommand{\ref}[1]{(\oldref{#1})}
\DeclareMathOperator{\tr}{tr}

\newcommand{\R}{\mathbb{R}}
\newcommand{\N}{\mathbb{N}}
\newcommand{\C}{\mathbb{C}}

\newcommand{\ket}[1]{\vert #1\rangle }
\newcommand{\bra}[1]{\langle #1 \vert}
\newcommand{\ketbra}[1]{\ket{#1} \bra{#1}}
\newcommand{\abs}[1]{{\left\vert{#1}\right\vert}}
\newcommand{\norm}[1]{{\left\Vert{#1}\right\Vert}}
\newcommand{\ot}{\otimes}
\newcommand{\id}{\mathbbm 1}
\newcommand{\Nin}{{n_{\mathrm{in}}}}
\newcommand{\Nout}{{n_{\mathrm{out}}}}
\newcommand{\Nparty}{{n_p}}

\usepackage{tikz} %
\usetikzlibrary{positioning,calc,shapes.arrows}

\begin{document}
\title{Efficiently Optimizing the Quantum Value of Bell Inequalities \\ using Batched Gradient Descent}
\author[1,2]{Xinyu Xu\thanks{These authors contributed equally.}}
\author[3]{Ping Zhu\raisebox{0.45ex}{\scriptsize *}}
\author[4]{Weikang Li}
\author[5,6]{Pierre Pocreau}
\author[3]{Dalu Ding\thanks{jerry.ding@jmuse.cn}}
\author[7,1]{Dawei Ding\thanks{daweiding@fudan.edu.cn}}
\affil[1]{Shanghai Institute for Mathematics and Interdisciplinary Sciences (SIMIS), \newline Shanghai 200433, China}
\affil[2]{Research Institute of Intelligent Complex Systems, Fudan University, \newline Shanghai 200433, China}
\affil[3]{Jmuse Technologies, Shanghai 200000, China}
\affil[4]{Center for Quantum Information, IIIS, Tsinghua University, Beijing 100084, China}
\affil[5]{Inria, Université Grenoble Alpes, 38000 Grenoble, France}
\affil[6]{CNRS, Grenoble INP, LIG, Université Grenoble Alpes, 38000 Grenoble, France}
\affil[7]{Center for Mathematics and Interdisciplinary Sciences, Fudan University, \newline Shanghai 200433, China}

\date{}
\maketitle

\begin{abstract}
    Exploring the rich and complex world of Bell inequalities requires numerical methods that scale effectively with the number of parties, inputs, and outputs.
    Such methods can be used to study large-scale Bell inequalities beyond the reach of analytic methods,
    to design experiments for violating more complicated Bell inequalities, and to enable applications of Bell inequality violation. Such applications include randomness generation and multi-agent coordination with restricted communication. However, current general purpose optimizers such as the see-saw method struggle to handle Bell inequalities with only a dozen inputs and outputs. In this paper we introduce an optimizer for the quantum value of a Bell inequality based on batched gradient descent (BGD). More precisely, our BGD optimizer is a differentiable search over feasible quantum strategies in which states and projective measurements are generated from unconstrained variables. The Bell expression is evaluated by a direct tensor contraction without forming the prohibitively large Bell operator. This formulation makes parallel random restarts inexpensive and is thus naturally implemented by a GPU. We evaluate our optimizer against the see-saw method and find a significant speedup for a wide variety of Bell inequality families. Implemented on a GPU, our BGD optimizer can optimize Bell inequalities in our numerical experiments with more than a thousand inputs or outputs within minutes.
    For these large-scale Bell inequalities, we also efficiently compute the classical value by enumerating possible strategies using a GPU or expressing the optimization as a mixed-integer linear program, for which we can use off-the-shelf solvers such as Gurobi.
    Our optimizers can serve as numerical evaluation tools for Bell experiments and 
    multi-agent coordination problems modeled by Bell inequalities with many inputs and outputs, a common feature of real-world scenarios such as high frequency trading and distributed systems.
\end{abstract}

\clearpage
\tableofcontents

\section{Introduction}
Bell inequalities~\cite{bell1964einstein} are constraints on the correlations between multiple parties obeying a local hidden variable theory. A series of experiments using quantum systems to violate these inequalities
definitively disproved any classical description of nature~\cite{clauser1969proposed,aspect1982experimental,weihs1998violation}. In fact, the significance of this conclusion motivated an entire field of study dedicated to Bell nonlocality~\cite{brunner2014bell} and played an important role in the development of quantum information science~\cite{aspect2022nobel}. Bell inequality violation is also the basis for certain quantum technologies, including device-independent quantum key distribution~\cite{zapatero2023advances},
certified randomness~\cite{AcinMasanes2016},
constant-depth quantum computation~\cite{bravyi2018quantum}, and multi-agent coordination with restricted communication~\cite{ding2026quantum}.

A systematic study of Bell inequalities is a challenging task~\cite{Rosset2014Classifying}. Bell inequalities can be defined as hyperplanes that bound the set of classical correlations expressed as a polytope in Euclidean space~\cite{Froissart1981Constructive,brunner2014bell}.
Explicitly characterizing this polytope and the set of quantum correlations that contains it
is already nontrivial in the minimal scenario with two parties, two inputs, and two outputs~\cite{Barizien2025Quantum}.
In fact, computing the maximum quantum violation, or \textbf{quantum value} $\omega_q$, of a Bell inequality is provably a difficult problem. There are known examples for which this is NP-hard~\cite{kempe2011entangled}, and in general the quantum value is uncomputable~\cite{ji2020mip}.

Nevertheless, there is a suite of numerical algorithms to compute bounds on $\omega_q$. One of the most notable general-purpose algorithms for computing lower bounds is the see-saw method~\cite{werner2001bell}, which iteratively optimizes the quantum measurement used by each party, ceteris paribus. Each iteration is a semidefinite program (SDP). As for upper bounds, the algorithm used is usually the Navascu\'es-Pironio-Ac\'in (NPA) hierarchy~\cite{navascues2008convergent}, which is a hierarchy of SDP relaxations of the optimization problem defining the quantum value. Each level of the hierarchy is a tighter upper bound, and in the limit of infinitely large levels, the hierarchy converges to the commuting-operator value.

Although these are general-purpose optimizers for Bell inequalities, they scale poorly with the size of the Bell inequality (number of parties $\Nparty$, inputs $\Nin$, or outputs $\Nout$). In the case of the see-saw method, we show in the Appendices that the computational complexity of each iteration is $\sim (\Nin \Nout)^\Nparty d^{2 \Nparty}$, where $d$ is the dimension of the quantum system held by each party. This becomes prohibitively expensive very quickly. In our field tests, we find that the see-saw method implemented on a CPU already requires about 300 seconds to optimize a bipartite Bell inequality with only a dozen inputs and outputs. Although this may be sufficient for small-scale theoretical studies, it is wholly inadequate for pushing the frontier of Bell nonlocality research.
Also, for Bell experiments for violating more complicated Bell inequalities, such numerical methods could prove inefficient for evaluating the feasibility of the experiment under photon loss~\cite{PhysRevA.47.R747,Xu2026Violating} and noise~\cite{CiscoQuantumSync2025}.
Furthermore, multi-agent coordination with restricted communication often involves complex scenarios such as high frequency trading~\cite{brandenburger2016team,szegedy2020systems,ding2024coordinating} and distributed systems~\cite{hasanpour2017quantum,da2025entanglement,arun2025faster}. There, the number of inputs (possible stock prices, possible data rates of a transmitter) and the number of outputs (possible trade orders, possible transmission channels) can be very large. Hence, for both pure and applied research, we need a more scalable general-purpose optimizer for Bell inequalities.

This has motivated a line of work on new optimizers. The approach in~\cite{deng2018machine} fixes the quantum measurements used and parameterizes the shared quantum state with a neural network, reducing the maximal violation with these measurements to a ground-state search. The authors of~\cite{bharti2019teach} cast the joint choice of state and measurements as a reinforcement-learning problem, sequentially generating their parameters from a stochastic policy conditioned on the choices made so far and optimizing the policy from a scalar Bell-inequality reward evaluated only after the full sequence is produced. They recover Tsirelson's bound for the Clauser--Horne--Shimony--Holt (CHSH) inequality and the optima of several multipartite Bell inequalities, at the cost of a comparatively long sequential optimization procedure. In~\cite{gardiner2026learning}, the authors optimize a differentiable parameterization of measurements within a policy-gradient loop, treating each game as a black-box oracle and assessing the computed quantum value against externally supplied reference values. In every case the search returns a single feasible strategy, and hence a lower bound on $\omega_q$, but at modest problem sizes. The above methods fix the measurements, carry a sparse or black-box training signal, or otherwise scale poorly with the size of the Bell inequality.

In this paper, we give a scalable optimizer for computing lower bounds on $\omega_q$ that uses batched gradient descent (BGD). This is a general-purpose optimizer and scales to problem sizes well beyond the reach of prior methods--- games with a dozen parties, a thousand inputs, or a thousand outputs. BGD parameterizes quantum measurements and the shared quantum state so that it always stays within the feasible set. The shared quantum state is a normalized complex vector, and each measurement basis is defined by a unitary expressed as the exponential of an anti-Hermitian matrix, followed by a stochastic assignment of basis vectors to outputs when the local dimension of the shared quantum state is not equal to the number of outputs. The search for the quantum value is then a first-order optimization over unconstrained parameters for the quantum state and unitaries, with gradients supplied by automatic differentiation. When stochastic post-processing is used, its probability-simplex parameters are projected after each update.

To evaluate the objective function, BGD contracts the Bell expression in a single batched tensor operation over the state amplitudes, the measurement operators, and the Bell expression coefficients. Thus, unlike the see-saw method, BGD never assembles the dense $d^{\Nparty} \times d^{\Nparty}$ Bell operator. This is the primary reason it can handle very large Bell inequalities. The same contraction advances a batch of independent random initializations together, so that many starting points explore the nonconvex optimization landscape in parallel. With this design, our method is naturally implemented by a GPU, which we use for some of our field tests.

Our paper is structured as follows. In~\Cref{sec:prelim}, we present preliminaries: definitions of nonlocal games, a mathematically equivalent formulation of Bell nonlocality, as well as existing optimizers for computing the \textbf{classical value} $\omega_c$ (largest value achievable classically) and quantum value $\omega_q$  of Bell inequalities. In~\Cref{sec:classical}, we discuss some numerical shortcuts and heuristic algorithms for computing the classical value. We do this by re-expressing the optimization as a mixed-integer linear program (MILP) and feeding it into a general-purpose MILP solver. We present our BGD optimizer in~\Cref{sec:bgd}. We then evaluate our optimizer and compare its performance with the see-saw method over a variety of Bell inequalities in~\Cref{sec:evaluation}. In particular, we use our BGD to evaluate how quantum entanglement can be used to increase the probability of identifying a common radio frequency band under adversarial jamming~\cite{wehner2026jamming}, a model of a real-world coordination problem, and demonstrate its scalability to instances with \(1820\) inputs. We also use BGD to optimize the Bell inequality tailored to maximally entangled states~\cite{Salavrakos2017Bell} with 1407 outputs.
We conclude with a discussion in~\Cref{sec:discussion}.
In~\Cref{app:seesaw-eng} to~\Cref{app:bgd-eng} we give further details of the different optimizers used in our numerical experiments.~\Cref{app:expressions} explains the different ways to express Bell inequalities. In~\Cref{app:multioutput-chsh}, we prove that a generalization of the CHSH game to more than two outputs (parties need to produce the same output or different outputs depending on the inputs) has a quantum value of $\cos^2 \frac \pi 8$ for all $\Nout \geq 2$. In~\Cref{app:cglmpgraph-instances}, we explicitly give the graphs that define the instances belonging to a class of Bell inequalities that we use as a benchmark for our optimizer. 

\section{Preliminaries}\label{sec:prelim}
\subsection{Nonlocal games}\label{section:nolocalgame}
We will use the mathematical framework of nonlocal games, which formalizes coordination tasks among non-communicating parties. In this framework, optimizing a Bell functional is equivalent to maximizing the expected utility of a game, as explained below.
In such a setting, multiple players participate in a joint task but are not allowed to communicate once the game begins. A schematic representation is shown in~\Cref{Fig:nonlocal_game}. For convenience, we replace the probabilistic predicate, which takes values in $[0,1]$, with a utility function that takes values in $\R$.

\begin{figure}[htbp!]
    \centering
    \begin{tikzpicture}[node distance=2cm,
  player/.style = {draw, circle, minimum width=.9cm, fill=gray!15},
  singlearrow/.style = {single arrow, draw, inner sep=0pt,
    minimum height=.7cm, minimum width=4mm, single arrow head extend=.3pt, rotate=-90}]

  \node[player] (P1) at (0,0) {$1$};
  \node[player, right=of P1] (P2) {$2$};        %
  \node[player] (Pn) at ($(P2)+(4,0)$) {$\Nparty$}; %

  \node at ($(P2)!0.5!(Pn)$) {$\cdots\cdots$};

  \node[singlearrow] at ([yshift=1cm]P1) {};
  \node[singlearrow] at ([yshift=1cm]P2) {};
  \node[singlearrow] at ([yshift=1cm]Pn) {};
  \node at ([yshift=1.7cm]P1) {$s_{1}$};
  \node at ([yshift=1.7cm]P2) {$s_{2}$};
  \node at ([yshift=1.7cm]Pn) {$s_{\Nparty}$};

  \node[singlearrow] at ([yshift=-1cm]P1) {};
  \node[singlearrow] at ([yshift=-1cm]P2) {};
  \node[singlearrow] at ([yshift=-1cm]Pn) {};
  \node at ([yshift=-1.7cm]P1) {$a_{1}$};
  \node at ([yshift=-1.7cm]P2) {$a_{2}$};
  \node at ([yshift=-1.7cm]Pn) {$a_{\Nparty}$};
\end{tikzpicture}
\caption{A nonlocal game with parties $1,\dots,\Nparty$. Each party $i$ receives input $s_i$ and produces output $a_i$.}
\label{Fig:nonlocal_game}
\end{figure}

\noindent We now give a formal definition.
\begin{dfn}
Let $\Nparty \ge 2$ be the number of parties. For each $i \in [\Nparty]$, let $S_i$ and $A_i$ be finite input and output sets. A \textbf{nonlocal game} is specified by:
\begin{itemize}
    \item a probability distribution $\pi$ over $\prod_{i=1}^{\Nparty} S_i$,
    \item a utility function
    \begin{equation*}
        \mathcal V : \prod_{i=1}^{\Nparty} A_i \times \prod_{i=1}^{\Nparty} S_i \to \R.
    \end{equation*}
\end{itemize}
We denote the game by $(\mathcal V, \pi)$.
\end{dfn}

In a single round of the game, each party $i$ is given an input $s_i$ sampled according to $\pi$, and produces an output $a_i$. The players are assumed to know $\pi$ and $\mathcal V$ in advance and may coordinate their strategy beforehand, but are not allowed to communicate after receiving their inputs.

Any strategy defines a conditional probability distribution
\begin{equation*}
p(\mathbf a \vert \mathbf s) \coloneqq p(a_1,\dots,a_{\Nparty} \vert s_1,\dots,s_{\Nparty}),
\end{equation*}
which we refer to as a \textbf{behavior}. This distribution describes how outputs are generated given inputs.
The corresponding \textbf{expected utility} is
\begin{equation*}
    \bar{\mathcal{V}} \coloneqq \sum_{a_i\in A_i, s_i\in S_i} \pi(\mathbf s) \cdot p(\mathbf a \vert \mathbf s) \mathcal V(\mathbf a \vert \mathbf s).
\end{equation*}

Define the \textbf{weighted utility tensor} by
\begin{equation}
    W_{\mathbf s,\mathbf a}
    \coloneqq \pi(\mathbf s)\,\mathcal V(\mathbf a\vert\mathbf s).
    \label{eq:weight-tensor}
\end{equation}
The expected utility is therefore a Bell functional,
\begin{equation*}
        \bar{\mathcal V}(p)
    =
    \sum_{\mathbf s\in\prod_i S_i}
    \sum_{\mathbf a\in\prod_i A_i}
    W_{\mathbf s,\mathbf a}\,p(\mathbf a\vert\mathbf s).
\end{equation*}
Conversely, any real coefficient tensor $W$ in a finite Bell scenario
can be represented in this form: choose an input distribution $\pi$
with full support and set
$\mathcal V(\mathbf a\vert\mathbf s)
= W_{\mathbf s,\mathbf a}/\pi(\mathbf s)$.
Thus, optimizing a Bell functional over a given class of behaviors
is equivalent to maximizing the expected utility of the corresponding
game over that class. This representation is generally not unique.

\paragraph{Classical strategies}
We begin with deterministic strategies.

\begin{dfn}
A \textbf{deterministic strategy} consists of functions $f_i : S_i \to A_i$ for each $i \in [\Nparty]$. The corresponding \textbf{deterministic behavior} is
\begin{equation*}
        p(\mathbf a \vert \mathbf s)
    = \prod_{i=1}^{\Nparty} \delta_{a_i, f_i(s_i)},
\end{equation*}
where $\mathbf a \in \prod_i A_i$ and $\mathbf s \in \prod_i S_i$.
\end{dfn}
\noindent More generally, \textbf{classical strategies} may employ shared randomness. Any such strategy can be expressed as a convex combination of deterministic ones. Thus, a \textbf{classical behavior} is a probabilistic mixture of deterministic behaviors. The highest possible expected utility achievable by classical strategies is denoted $\omega_c$:
\begin{equation*}
    \omega_c \coloneqq \max_\text{classical} \bar{\mathcal V}.
\end{equation*}
A \textbf{Bell inequality} is an upper bound on expected utility by $\omega_c$,
\begin{equation*}
    \bar{\mathcal V} \leq \omega_c,
\end{equation*}
which holds by definition of $\omega_c$ whenever $\bar{\mathcal V}$ is the expected utility of a classical strategy.

\paragraph{Quantum strategies}
We now allow the parties to share quantum entanglement. A quantum strategy for a nonlocal game is defined as follows.
\begin{dfn}
Let $B_i$ be a finite-dimensional Hilbert space for each $i \in [\Nparty]$. A \textbf{quantum strategy} consists of:
\begin{itemize}
    \item a shared pure state $\ket{\psi} \in \bigotimes_{i=1}^{\Nparty} B_i$,
    \item for each input $s_i$, a projective measurement
    \begin{equation*}
    M_i(s_i) = \{\Pi_{i,a_i}(s_i)\}_{a_i \in A_i}
    \end{equation*}
    acting on $B_i$, where the projectors satisfy
    \begin{equation*}
        \Pi_{i,a_i}(s_i) = \Pi_{i,a_i}(s_i)^\dagger, \qquad \Pi_{i,a_i}(s_i)\Pi_{i,a_i'}(s_i) = \delta_{a_i,a_i'}\Pi_{i,a_i}(s_i), \qquad \sum_{a_i \in A_i} \Pi_{i,a_i}(s_i) = I_{B_i}.
    \end{equation*}
\end{itemize}
The resulting \textbf{quantum behavior} is
\begin{equation*}
    p(\mathbf a \vert \mathbf s)
    = \bra{\psi}
    \bigotimes_{i=1}^{\Nparty} \Pi_{i,a_i}(s_i)
    \ket{\psi}.
\end{equation*}
\end{dfn}
\noindent The quantum value of the game, denoted by $\omega_q$, is the supremum of the expected utility over all finite-dimensional quantum strategies:
\begin{equation*}
 \omega_q \coloneqq \sup_\text{quantum} \bar{\mathcal V}.
\end{equation*}
The supremum ranges over all finite local dimensions and need not
be attained by any finite-dimensional strategy~\cite{slofstra2019set}.

Since $\omega_q$ ranges over arbitrary finite local dimensions,
restricting to pure states and projective measurements entails no loss
of generality. Any
mixed shared state can be purified by enlarging one party's local
Hilbert space without changing the resulting behavior.
And by Naimark's theorem, any POVM can be realized as a projective measurement on a larger space, a construction also used experimentally~\cite{Conlon2023Discriminating}.
At a fixed local dimension, however, this equivalence need not hold; for example, in the two-qubit inequality of~\cite{vertesi2010two}, a qubit POVM outperforms every qubit projective measurement. Because the available measurement resources are platform-dependent and we do not model a specific physical implementation, we adopt rank-one projective measurements as BGD's default search class. The see-saw method~\cite{werner2001bell,P_l_2010} instead optimizes general POVMs at the chosen dimension. Only for Mutually unbiased bases (MUB) does BGD additionally optimize a stochastic relabeling of the outcomes (classical post-processing); the resulting effects commute for each input, so this construction does not parameterize arbitrary POVMs at that dimension.

Classical behaviors form a subset of quantum behaviors. There exist nonlocal games for which quantum strategies outperform all classical ones:
\begin{equation*}
 \omega_q > \omega_c.
\end{equation*}
This separation indicates a \textbf{Bell inequality violation} or a \textbf{quantum advantage}: a strategy whose expected utility exceeds the classical value, $\bar{\mathcal V} > \omega_c$.

\subsection{Existing optimizers for the classical value}
Since a classical behavior is a convex combination of deterministic behaviors and the expected utility is linear in the behavior, to compute $\omega_c$ it is sufficient to perform a finite optimization over deterministic strategies.
This observation underlies the exact classical optimizers reviewed below.

\paragraph{Brute force enumeration.}
The most direct optimizer enumerates all deterministic local strategies and evaluates each. A party with $\abs{S_i}$ inputs and $\abs{A_i}$ outputs has $\abs{A_i}^{\abs{S_i}}$ possible functions, so the joint strategies number $$\prod_{i=1}^{\Nparty} \abs{A_i}^{\abs{S_i}}.$$
Setting $\abs{S_i} = \Nin$ and $\abs{A_i} = \Nout$ for all $i \in [\Nparty]$ (we will call this the \textbf{symmetric case}), the total number of deterministic strategies is
\begin{equation*}
    N_\mathrm{det} = \Nout^{\Nparty \Nin}.
\end{equation*}
This number grows exponentially with the number of parties and inputs, which sit in the exponent, and polynomially (of degree $\Nparty\Nin$) with the number of outputs; brute force is therefore infeasible outside small Bell scenarios. %

\paragraph{LP-based methods.}
A second type of optimizer exploits the convex geometry of the classical set~\cite{zukowski1999strengthening,kaszlikowski2000violations}. Let $d_\lambda$ enumerate the deterministic behaviors.
The classical set is the polytope
$\mathcal L=\operatorname{conv}\{d_\lambda\}$,
whose vertices are precisely the deterministic behaviors. We view a behavior as a real vector in the ambient space
\begin{equation*}
    \R^{\prod_{i=1}^{\Nparty} \abs{A_i}\abs{S_i}}.
\end{equation*}
Maximizing the expected utility over $\mathcal L$ is a linear program in the vertex weights $\{q_\lambda\}$, with $q_\lambda\geq0$ and $\sum_\lambda q_\lambda=1$. Deciding whether a given behavior is classical is the corresponding feasibility problem: a feasible solution gives an explicit local decomposition, whereas a dual infeasibility certificate obtained through LP duality gives a separating Bell inequality. The formulation has one variable for each deterministic behavior, of which there are
$\prod_i\abs{A_i}^{\abs{S_i}}$, and hence grows exponentially with the number of parties or inputs
for a fixed number of outputs greater than one.

\paragraph{Tropical tensor-network methods.}
A more recent direction exploits structure in the Bell inequality rather than treating the classical optimization as a generic combinatorial search. When the bound can be written as a minimization of a sum of local cost terms, it can be recast as a tensor-network contraction carried out in the tropical algebra $(\min,+)$ and evaluated by dynamic programming~\cite{Hu2026Tropical}. This is highly effective for inequalities with few-body correlators and finite interaction range on geometries that admit a low-width contraction, including one-dimensional geometries. For some translationally invariant one-dimensional systems, the method also extends to the thermodynamic limit through tropical renormalization. However, exact contraction can require exponentially large
intermediate tensors, so the method does not guarantee efficient
optimization for arbitrary nonlocal games.

\subsection{Existing optimizers for the quantum value}
\label{subsec:existing_quantum}
Let $\mathcal Q$ denote the set of quantum behaviors realizable over arbitrary finite local dimensions. This set is convex, and $\omega_q=\sup_{p\in\mathcal Q}\bar{\mathcal V}(p)$, where $\bar{\mathcal V}$ is linear in $p$. The difficulty is that $\mathcal Q$ admits no known tractable characterization~\cite{navascues2008convergent,slofstra2019set,miller2021membership,ji2020mip},
so no tractable exact convex formulation is known in general.
Instead, numerical methods construct explicit strategies to lower-bound $\omega_q$ or optimize over outer relaxations to upper-bound it. %

Recall that a quantum strategy consists of a shared state $\ket{\psi}$ together with local measurements $\{\Pi_{i,a_i}(s_i)\}$, realizing the behavior $p(\mathbf a\vert\mathbf s)=\bra{\psi}\bigotimes_{i=1}^{\Nparty} \Pi_{i,a_i}(s_i)\ket{\psi}$. For fixed local measurements, define the \textbf{Bell operator}
\begin{equation}
    \mathcal B \coloneqq \sum_{\mathbf s\in\prod_i S_i}\sum_{\mathbf a\in\prod_i A_i} \pi(\mathbf s)\, \mathcal V(\mathbf a\vert \mathbf s)\, \bigotimes_{i=1}^{\Nparty} \Pi_{i,a_i}(s_i).
    \label{eq:bell-operator}
\end{equation}
Then $\bar{\mathcal V} = \bra{\psi}\mathcal B\ket{\psi}$.
The same expression applies to general POVMs, with the projectors
replaced by the corresponding measurement effects.

\paragraph{See-saw method.}
The see-saw method~\cite{werner2001bell,P_l_2010} is a heuristic for lower bounds on~$\omega_q$ that alternates optimization over the components of a quantum strategy. It fixes all but one component---a single party's collection of POVMs or the shared state---optimizes that component, and iterates. Each measurement update can be formulated as a semidefinite program (SDP), while the state update is a largest-eigenvalue problem. Solving each subproblem exactly makes the expected utility nondecreasing. Because the joint problem is nonconvex, however, the method is not guaranteed to reach the global optimum, and its result can depend on the initialization. In practice, the method is run from several random initializations, and the best result is returned.

For simplicity, we assume all quantum systems $B_i$ have the same fixed dimension $d$.
Fixing the state $\rho=\ketbra{\psi}$ and all POVMs but party~$k$'s makes the functional linear in party~$k$'s measurement effects $\{M_{a_k\vert s_k}\}_{s_k,a_k}$. Thus this subproblem is the SDP
\begin{equation}
    \max_{\{M_{a_k\vert s_k}\}} \sum_{s_k,a_k}\tr\!\big(\Gamma^{(k)}_{s_k,a_k}\, M_{a_k\vert s_k}\big)
    \quad\text{s.t.}\quad
    M_{a_k\vert s_k}\succeq 0 \quad\forall s_k,a_k,\ \ \sum_{a_k}M_{a_k\vert s_k}=\id_d \quad\forall s_k,
    \label{eq:seesaw_povm_sdp}
\end{equation}
where $\Gamma^{(k)}_{s_k,a_k}$ is the $d\times d$ Hermitian matrix obtained by contracting the Bell coefficients and $\rho$ with the other parties' fixed POVMs for each $(s_k,a_k)$. Multipartite see-saw optimization is supported by the Julia \texttt{Ket.jl} library~\cite{KetJl}. We use a modified copy of this implementation; the modifications to the state update and convergence criterion are described in~\Cref{app:seesaw-eng}. Its per-sweep complexity is derived in~\Cref{app:bigo-seesaw}.

\paragraph{Black-box optimization.}
A projective measurement can be parameterized by an orthonormal basis, represented by a unitary, and an assignment of its basis vectors to outcome subspaces~\cite{ding2024coordinating}. For fixed measurements, maximizing over the shared state gives the largest eigenvalue of the Bell operator, with a corresponding top eigenvector. Basis-vector phases are irrelevant to the projectors, and for each party one input basis can be fixed to the computational basis by a local change of basis.

Previous work treated this eigenvalue as a black-box function of the measurement parameters. It used derivative-free grid search and the CMA-ES evolutionary search~\cite{hansen2016cma,hamano2022cma,nomura2024cmaes}, and also considered gradients estimated by finite differences. With $11$ grid points per continuous parameter, and including the discrete assignments to outcomes, two-qubit CHSH requires on the order of $10^6$ Bell-operator evaluations; the analogous three-party scenario exceeds $10^9$. CMA-ES avoids a fixed grid but still relies on sampled function values, while a finite-difference gradient requires additional evaluations for each parameter. These costs motivate computing gradients directly by automatic differentiation in our quantum optimizer.

\paragraph{NPA hierarchy.}
We also outline an algorithm used to compute upper bounds on $\omega_q$.

The Navascu\'es--Pironio--Ac\'{\i}n (NPA)
hierarchy~\cite{navascues2008convergent,navascues2007bounding}
provides upper bounds on $\omega_q$ without restricting the quantum dimension. Each level optimizes over an outer
relaxation of the quantum set using semidefinite programming.
Increasing the level gives non-increasing upper bounds, at the
cost of larger optimization problems. These bounds complement
the lower bounds obtained by the see-saw
method and the BGD optimizer; when the bounds coincide, the quantum value
is certified. In general, however, the hierarchy converges to
the commuting-operator value, which can exceed the
finite-dimensional tensor-product value
$\omega_q$~\cite{ji2020mip}.

For multipartite games, we use an augmented first-level relaxation
that involves every term in the Bell expression.
Its construction, implementation, and computational complexity
are detailed in~\Cref{app:npa-eng}. We mention that modifications to the vanilla NPA algorithm were proposed to improve scalability~\cite{mortimer2025bounding}, but the resulting algorithm may not converge to the commuting-operator value.

\section{Improved Classical Optimizers}\label{sec:classical}
In this paper we compute the classical value $\omega_c$ using two complementary methods for the same finite optimization problem: a vectorized brute-force search that runs on a GPU and a mixed-integer linear program (MILP) solved using a commercial global solver. Both target the same optimum; they differ in the regimes in which they are efficient. Throughout we keep the notation of \Cref{section:nolocalgame}: each party $i\in[\Nparty]$ receives an input $s_i\in S_i$ and returns an output $a_i\in A_i$, the inputs are drawn from a prior $\pi$, and the utility is $\mathcal V$. For complexity statements we assume $\abs{S_i}=\Nin$ and $\abs{A_i}=\Nout$ for all $i$, so the number of deterministic strategies is
\begin{equation*}
N_{\mathrm\det}=\Nout^{\Nparty\Nin}.
\end{equation*}
Since every classical behavior is a convex combination of deterministic local
behaviors and the expected utility is linear, it suffices to optimize over
deterministic local strategies.

\subsection{Vectorized GPU brute force}
The most direct exact method would enumerate all deterministic local
strategies. We reduce this enumeration using one party's best response and
evaluate the remaining strategies in parallel on a GPU.

\paragraph{Best response for one party.}
A deterministic strategy is a tuple of functions \(f_i:S_i\to A_i\), one for
each party \(i\), so in the general, possibly asymmetric case their number is
\begin{equation*}
N_{\det}=\prod_{i=1}^{\Nparty}|A_i|^{|S_i|}.
\end{equation*}

We eliminate one party, denoted by \(b\), by optimizing its best response rather
than enumerating its response function.
For fixed functions \(f_i:S_i\to A_i\) of all parties \(i\ne b\), write
\begin{equation*}
f_{b^c}:=\{f_i\}_{i\ne b},
\qquad
s_{b^c}:=(s_i)_{i\ne b}.
\end{equation*}
For each input \(s_b\in S_b\) and each candidate output \(a_b\in A_b\),
define the score
\begin{equation*}
C_{f_{b^c}}(s_b,a_b)
:=
\sum_{s_{b^c}\in \prod_{i\ne b}S_i}
\pi(s_b,s_{b^c})\,
\mathcal V\!\left(a_b,\{f_i(s_i)\}_{i\ne b}\vert s_b,s_{b^c}\right).
\end{equation*}
Then party \(b\)'s optimal response to \(f_{b^c}\) is chosen independently
for each input \(s_b\):
\begin{equation*}
f_b^*(s_b)\in \arg\max_{a_b\in A_b} C_{f_{b^c}}(s_b,a_b).
\end{equation*}
The maximization separates over \(s_b\) because the choice of \(f_b(s_b)\)
affects only terms with that input; there is no constraint coupling party
\(b\)'s outputs at different inputs.
Therefore the classical value can be computed as
\begin{equation*}
\omega_c
=
\max_{\{f_i:S_i\to A_i\}_{i\ne b}}
\sum_{s_b\in S_b}
\max_{a_b\in A_b}
C_{f_{b^c}}(s_b,a_b).
\end{equation*}

This replaces the enumeration of party \(b\)'s
\(|A_b|^{|S_b|}\) possible functions by a per-input maximization over
\(|A_b|\) outputs.  For fixed \((s_b,a_b)\), the sum defining
$C_{f_{b^c}}(s_b,a_b)$ contains one term for each of the
$\prod_{i\neq b}|S_i|$ choices of $s_{b^c}$.\footnote{For simplicity, we assume
in this paragraph that $\pi$ has full support on $\prod_i S_i$.} Thus the total
number of terms evaluated is
\begin{equation}
\label{eq:N_eval}
N_\mathrm{eval}
=
\left(\prod_{i\ne b}|A_i|^{|S_i|}\right)
|A_b|
\prod_{i=1}^{\Nparty}|S_i|
=
N_{\det}\,
\frac{\prod_{i=1}^{\Nparty}|S_i|}{|A_b|^{|S_b|-1}}.
\end{equation}
To minimize the number of terms, we choose
\begin{equation*}
b\in \arg\max_i |A_i|^{|S_i|-1}.
\end{equation*}
In the symmetric case where \(|S_i|=\Nin\) and
\(|A_i|=\Nout\), this becomes
\begin{equation*}
N_\mathrm{eval} = \Nout^{(\Nparty-1)\Nin+1}\Nin^{\Nparty}.
\end{equation*}
Na\"ive brute force instead evaluates
$\Nout^{\Nin\Nparty}\Nin^{\Nparty}$ weighted-utility
terms, so the best-response reduction saves a factor of
$\Nout^{\Nin-1}$.

\paragraph{Dense batched evaluation.}
After eliminating party \(b\), there remain
$\prod_{i\ne b}|A_i|^{|S_i|}$ partial deterministic strategies, or
$\Nout^{\Nin(\Nparty-1)}$ in the symmetric case. Since these
strategies can be evaluated independently, we process them in parallel batches
on a GPU. Each batch is represented by a contiguous block of integer strategy
labels, which are decoded by mixed-radix decomposition to generate
$f_i(s_i)$, $i\ne b$, on the fly. Thus the full table of partial strategies
need not be stored.

The resulting outputs select entries of the weighted utility tensor
$W_{\mathbf s,\mathbf a}$. Summing over
$s_{b^c}$ produces $C_{f_{b^c}}(s_b,a_b)$, and party $b$'s best response is the
reduction
\begin{equation*}
\sum_{s_b\in S_b}\max_{a_b\in A_b} C_{f_{b^c}}(s_b,a_b).
\end{equation*}
Decoding, indexing, summation, and maximization are performed as dense tensor
operations. If the full enumeration exceeds GPU memory, we cap the batch size
and retain the largest value across batches. Batching therefore changes only
the storage and execution of the search, not the set of strategies evaluated.

\subsection{Mixed-integer linear program representation}\label{subsec:milp}
When the input cardinality $\Nin$ is large, the enumeration over $\Nout^{\Nin(\Nparty-1)}$ partial strategies becomes infeasible.  We instead encode the output chosen for every party and input as a mixed-integer linear program (MILP).  A branch-and-bound solver repeatedly fixes some of these choices and uses a relaxed problem to upper-bound all strategies consistent with the remaining choices.  Any branch whose upper bound cannot improve the best strategy already found can be discarded without enumerating its individual strategies.

\paragraph{One-hot strategies without binaries.} Represent party $i$'s local
response to input $s_i$ by the variables
\begin{equation}
    P_{i,s_i,a_i} \coloneqq p_i(a_i\vert s_i)\in[0,1],
    \qquad
    \sum_{a_i\in A_i} P_{i,s_i,a_i}=1.
    \label{eq:milp_simplex}
\end{equation}
For each party-input pair $(i,s_i)$, exactly one output must be selected.  We
enforce this using a special ordered set of type~1 (SOS1): at most one variable
in $\{P_{i,s_i,a_i}\}_{a_i\in A_i}$ may be nonzero.  Together with the
normalization in \Cref{eq:milp_simplex}, this makes that variable equal to one
and all others zero, so $P_{i,s_i,\cdot}$ represents a deterministic response.
The feasible discrete assignments are therefore in one-to-one correspondence
with the $\prod_i|A_i|^{|S_i|}$ deterministic strategies.  Although effective
relaxation bounds may eliminate many strategies, the worst-case work can remain
comparable to explicit enumeration.

\paragraph{Linearized objective via a joint selector.} A second block of variables $t_{\mathbf s,\mathbf a}\in[0,1]$, one for each pair of an input combination $\mathbf s\in\prod_iS_i$ and an output combination $\mathbf a\in\prod_iA_i$, represents whether the joint output $\mathbf a$ is selected for input $\mathbf s$.  The following linear consistency constraints tie these variables to the parties' local choices:
\begin{equation}
    \sum_{\mathbf a:\,a_i=o} t_{\mathbf s,\mathbf a} \;=\; P_{i,s_i,o}
    \qquad \forall\, i\in[\Nparty],\ \mathbf s\in\textstyle\prod_iS_i,\ o\in A_i.
    \label{eq:milp_marginal}
\end{equation}
Once the local choices $P$ are one-hot, \Cref{eq:milp_marginal} forces
$t_{\mathbf s,\mathbf a}=1$ for the unique joint output prescribed by the
strategy and zero for every other output.  Equivalently,
$t_{\mathbf s,\mathbf a}=\prod_iP_{i,s_i,a_i}$ at every deterministic
solution.  The constraints also imply
$\sum_{\mathbf a}t_{\mathbf s,\mathbf a}=1$, so no separate normalization is
needed.  The classical value is therefore
\begin{equation}
    \omega_c \;=\; \max \ \sum_{\mathbf s,\mathbf a} \pi(\mathbf s)\,\mathcal V(\mathbf a\vert \mathbf s)\, t_{\mathbf s,\mathbf a}
    \label{eq:milp_objective}
\end{equation}
subject to \Cref{eq:milp_simplex}, the SOS1 sets, and \Cref{eq:milp_marginal}. The constraint matrix depends only on the game \emph{shape} $(\abs{S_i},\abs{A_i})$; a particular game enters solely through the objective coefficients $\pi(\mathbf s)\,\mathcal V(\mathbf a\vert \mathbf s)$.

\paragraph{Optimality bounds and model size.} The objective and consistency
constraints are linear; the only discrete restrictions are the SOS1 output
choices.  We solve the resulting MILP with Gurobi.  Subject to the
solver's prescribed numerical tolerances, any feasible
incumbent gives a lower bound on $\omega_c$, while the solver's best bound gives
an upper bound.  Together they bracket $\omega_c$, and their difference is the
absolute optimality gap.  The model size is
dominated by the selector block: in
the symmetric case it has $(\Nin\Nout)^{\Nparty}$ variables $t_{\mathbf s,\mathbf a}$ and
$\Nparty\,\Nin^{\Nparty}\Nout$ marginal-consistency rows.

\subsection{Comparing the optimizers}
\label{subsec:classical_comparison}
The vectorized GPU brute-force optimizer is well suited to scenarios where each party's
strategy count $\abs{A_i}^{\abs{S_i}}$ is small, as occurs when the input
cardinality is small and the output cardinality remains moderate.  A
representative case is the MABK family (\Cref{section:mabk}), with two inputs
and outputs per party: each party has only $\Nout^{\Nin}=4$ response functions.
Although the remaining enumeration
$\Nout^{\Nin(\Nparty-1)}=4^{\,\Nparty-1}$ still grows exponentially with the
number of parties, the partial strategies can be evaluated independently and
in parallel on a GPU. They are generated batch by batch, so the full set need
not be stored at once.

For fixed $\Nparty$ and $\Nout$, the number of partial strategies enumerated by
brute force grows exponentially with $\Nin$, whereas the dominant MILP
selector block grows polynomially as $(\Nin\Nout)^{\Nparty}$. This difference helps explain
why the MILP is often preferable at large input cardinality, for example in
games tailored to maximally entangled states~\cite{Salavrakos2017Bell} with large parameters, although
it does not imply polynomial solution time. When the relaxation is
sufficiently tight, its upper bounds can exclude subsets of deterministic
assignments without evaluating them individually; the resulting advantage is
instance dependent. For fixed $\Nin$ and $\Nout$, both the brute-force enumeration
count and the MILP model size grow
exponentially with $\Nparty$. The brute-force search can stream independent
evaluations in parallel, while the full MILP model must be stored. Thus the
two methods trade direct parallel enumeration against bound-based elimination
of strategies.

\section{Batched Gradient Descent Optimizer}
\label{sec:bgd}

The gradient-free optimizer in~\Cref{subsec:existing_quantum} treats the Bell value as a black-box objective and therefore does not use the derivatives available from a differentiable parameterization. Batched gradient descent (BGD) instead applies first-order optimization to a parameterized family of feasible quantum strategies. The expected utility and its gradient are evaluated by a direct tensor contraction without ever constructing the full Bell operator. In the direct projective parameterization, feasibility is enforced by normalization and unitarity; when stochastic post-processing is used, its simplex constraints are enforced by projection after each update. Every evaluated strategy is therefore legitimate, so each value BGD reports is a lower bound on~$\omega_q$. We give here a high-level overview of the algorithm and provide further details in~\Cref{app:bgd-eng}.

\subsection{Differentiable quantum strategies}
\label{subsec:diff-quantum-strategy}

Fix local Hilbert-space dimensions $d_i$ and write $D=\prod_{i=1}^\Nparty d_i$. We optimize over shared pure states; this entails no loss for fixed measurements and dimensions because the expected utility is linear in the density operator and is therefore maximized at a pure state. The shared state is represented by a nonzero vector $z\in\C^D$, normalized before every objective evaluation,
\begin{equation}
    \ket{\psi_z}=\frac{z}{\norm{z}}.
    \label{eq:normalized-state}
\end{equation}
For each party $i$ and input $s_i\in S_i$, the measurement basis is generated by a differentiable unitary $U^{(i)}_{s_i}\in\mathrm U(d_i)$ generated from
an unconstrained complex matrix $A^{(i)}_{s_i} \in \mathbb C^{d_i\times d_i}$:
\begin{equation}
    U^{(i)}_{s_i}
    =
    \exp\!\left(\frac{A^{(i)}_{s_i}-(A^{(i)}_{s_i})^\dagger}{2}\right).
    \label{eq:matrix-exp-unitary}
\end{equation}
Since the exponent is skew-Hermitian, this construction gives a unitary for every $A^{(i)}_{s_i}$ and requires no projection onto $\mathrm U(d_i)$ after a gradient update.

We first describe the complete projective measurement used when the local dimension equals the number of outputs, $d_i=\abs{A_i}$. Each basis vector of $U^{(i)}_{s_i}$ is then assigned to one output. Writing $\ket{u^{(i)}_{s_i,a_i}}$ for the $a_i$-th column, the corresponding measurement element is the rank-one projector
\begin{equation}
    \Pi_{i,a_i}(s_i)
    =
    \ketbra{u^{(i)}_{s_i,a_i}}.
    \label{eq:local-projector}
\end{equation}
The trainable parameters are the state vector $z$ and the matrices $A=\{A^{(i)}_{s_i}\}$ that generate the unitaries via~\Cref{eq:matrix-exp-unitary}. %
The behavior generated by $\Theta=(z,A)$ is
\begin{equation}
    p_\Theta(\mathbf a\vert \mathbf s)
    =
    \bra{\psi_z}
    \left(
        \bigotimes_{i=1}^{\Nparty} \Pi_{i,a_i}(s_i)
    \right)
    \ket{\psi_z},
    \label{eq:parametric-quantum-behavior}
\end{equation}
and the objective is the expected utility
\begin{equation}
    J(\Theta)
    =
    \sum_{\mathbf s\in\prod_i S_i}
    \sum_{\mathbf a\in\prod_i A_i}
    \pi(\mathbf s)\,\mathcal V(\mathbf a\vert \mathbf s)\,p_\Theta(\mathbf a\vert \mathbf s).
    \label{eq:grad-quantum-objective}
\end{equation}
Every value returned by this optimization is attained by an explicit strategy in this variational family and hence lower-bounds~$\omega_q$; the parameterization need not exhaust all measurements available at the chosen dimensions.

When $d_i \neq \abs{A_i}$, we perform a rank-one projective measurement in the basis defined by $U_{s_i}^{(i)}$ followed by a stochastic map. This maps the $d_i$ measurement outcomes to the $\abs{A_i}$ outputs. For each party and input this map is a $d_i \times \vert A_i \vert$ row-stochastic matrix $(S^{(i)}_{s_i})_{b,a_i}\ge0$, $\sum_{a_i}(S^{(i)}_{s_i})_{b,a_i}=1$, and gives the POVM effects
\begin{equation*}
M^{(i)}_{a_i\vert s_i}=\sum_{b=1}^{d_i} (S^{(i)}_{s_i})_{b,a_i}\ketbra{u^{(i)}_{s_i,b}}.
\end{equation*}
For each fixed input these effects commute, since they arise by classical post-processing of a single projective measurement; this construction does not parameterize an arbitrary POVM at fixed dimension. The entries of all matrices $S^{(i)}_{s_i}$ are optimized jointly with the state and the unitaries. After every gradient step, each row is projected back onto the probability simplex; this preserves the row-stochastic constraints and permits boundary points, including deterministic $0/1$ assignments. When $d_i=\abs{A_i}$ and $S^{(i)}_{s_i}=I_{d_i}$ for every input $s_i$, this reduces to the projective measurement in~\Cref{eq:local-projector}. When $d_i > \abs{A_i}$ and $(S_{s_i}^{(i)})_{b,a_i} \in \{0,1\}$, this yields a
coarse-grained projective measurement, possibly with
higher-rank or zero projectors.

\subsection{Direct Bell-value contraction}
\label{subsec:direct-bell-contraction}

Substituting \Cref{eq:local-projector,eq:parametric-quantum-behavior}
into~\Cref{eq:grad-quantum-objective} and using the weighted utility
tensor $W$ defined in~\Cref{eq:weight-tensor} gives a direct
tensor-contraction expression for the objective. Using local basis
indices $r_i,t_i\in[d_i]$ and the convention
$U^{(i)}_{s_i;r_i a_i}\coloneqq
\langle r_i\vert u^{(i)}_{s_i,a_i}\rangle$, we obtain
\begin{equation}
\begin{aligned}
    J(\Theta)
    =
    \sum_{\mathbf s,\mathbf a,\mathbf r,\mathbf t}
    &
    W_{\mathbf s,\mathbf a}\,
    \overline{\psi_z(\mathbf r)}\,
    \psi_z(\mathbf t)
    \times
    \prod_{i=1}^{\Nparty}
    U^{(i)}_{s_i;r_i a_i}\,
    \overline{U^{(i)}_{s_i;t_i a_i}}.   
\end{aligned}
    \label{eq:bell-einsum}
\end{equation}
An equivalent evaluation first contracts $W$ with the local projectors $\Pi_{i,a_i}(s_i)$ to form the dense $D\times D$ Bell operator~\Cref{eq:bell-operator}, and then takes its expectation in the state. This familiar formulation requires storing an operator with $D^2$ entries. We instead contract~\Cref{eq:bell-einsum} by successively applying the local basis changes to the state indices before joining the bra and ket contractions through $W$. This order never forms the dense Bell operator. The contraction order and its cost are analyzed in~\Cref{app:cost}. Avoiding the explicit Bell operator is central to the improved scaling with quantum dimension and party number relative to the gradient-free optimizer of~\Cref{sec:prelim}.

When $d_i \neq \abs{A_i}$, the basis-outcome indices $b_i\in[d_i]$ must be retained and then mapped to the reported outputs by $S^{(i)}_{s_i}$. In this case the corresponding contraction is
\begin{equation}
\label{eq:stochastic}
\begin{aligned}
    J(\Theta)
    =
    \sum_{\mathbf s,\mathbf a,\mathbf b,\mathbf r,\mathbf t}
    &
    W_{\mathbf s,\mathbf a}\,
    \overline{\psi_z(\mathbf r)}\,
    \psi_z(\mathbf t)
    \times
    \prod_{i=1}^{\Nparty}
    U^{(i)}_{s_i;r_i b_i}\,
    (S^{(i)}_{s_i})_{b_i,a_i}\,
    \overline{U^{(i)}_{s_i;t_i b_i}}.  
\end{aligned}
\end{equation}

\subsection{Extensions}
\label{subsec:extensions}

The contraction \Cref{eq:bell-einsum} admits the following extensions.

\paragraph{Batched restarts.} Since the optimization landscape is nonconvex, we use many random initializations. We optimize a batch of $q$ of them together by adding a leading batch index to every tensor in~\Cref{eq:bell-einsum} except tensor $W$, which is shared across the batch. A single contraction then returns $q$ objective values, and all $q$ strategies are advanced in parallel; the best batch element encountered is retained. On parallel tensor hardware the per-iteration work is dominated by batched matrix multiplications, so the marginal cost of additional restarts remains small until the hardware is saturated (see \Cref{app:batching}).

\paragraph{Fixed state.} For controlled comparisons in which different methods must be evaluated on the same shared state, the state $z$ is supplied in advance and held fixed, optimizing only the local measurements. A common choice when every quantum system has the same dimension is the GHZ-type state
\begin{equation}
    \ket{\Phi_d^{(\Nparty)}}=\frac{1}{\sqrt d}\sum_{j=0}^{d-1}\ket{j}^{\ot \Nparty}.
    \label{eq:fixed-ghz-state}
\end{equation}

\subsection{Objective and optimization}
\label{subsec:objective-optimization}

BGD minimizes the loss $\mathcal L(\Theta)=-J(\Theta)$, equivalently maximizing the expected utility~\Cref{eq:grad-quantum-objective}. When batching is used, the loss is summed over the restarts; because their parameters are disjoint, automatic differentiation supplies the full gradient for each strategy independently, while the per-restart conditioning keeps their step scales separate. The state vector and unitary generators are updated directly, while the stochastic post-processing matrices, when present, are projected onto their probability simplices as described above.

Two numerical choices make the optimization robust across games with different utility scales. First, Adam's coordinate-wise preconditioning is supplemented by a per-restart scalar factor derived from its second-moment estimate. This factor leaves the step unchanged during the initial optimization regime and damps it near stationary points. Second, an adaptive learning-rate procedure decreases the step when progress stalls and restores both the parameters and Adam moments when a trial phase is rejected. These mechanisms are detailed in \Cref{app:preconditioning} and \Cref{app:linesearch}. The search terminates when the relative change in the objective stays below a tolerance for a fixed patience window, or when a time or iteration budget is reached, and returns the best feasible strategy encountered.

\section{Evaluation}\label{sec:evaluation}
We benchmark families of Bell inequalities
across a range of \textbf{game shapes} $(\Nparty,\Nin,\Nout)$.
These families are from existing Bell nonlocality literature, are motivated by real-world multi-agent coordination problems, or are generated randomly.
We organize the benchmarks into solved and unsolved families.

For \textbf{solved families}, the classical and quantum values are known analytically for every instance. The analytic quantum value therefore provides an exact reference for assessing the accuracy and convergence of BGD and the see-saw method. We characterize the numerical reach of each method by varying the number of parties $\Nparty$, the number of inputs $\Nin$, the number of outputs $\Nout$, or the local quantum dimension $d$ along prescribed scaling axes. Solved family benchmarks include the Bell inequalities tailored to maximally entangled states~\cite{Salavrakos2017Bell}, the MABK inequalities~\cite{Mermin1990a,Ardehali1992,BelinskyKlyshko1993a}, the Mermin inequalities~\cite{Mermin1990a}, and the chained Bell inequalities~\cite{Pearle1970,Braunstein1990Wringing}.

For \textbf{unsolved families}, the exact quantum value is not known in general, and in some cases the exact classical value is also unknown. For the families used to directly compare optimizers, BGD and the see-saw method provide lower bounds on $\omega_q$ through explicit feasible strategies, while the NPA hierarchy provides an upper bound whenever the corresponding SDP relaxation can be solved within the specified computational resource limit.
Unsolved family benchmarks include the CGLMP inequalities~\cite{CollinsEtAl2002a}, load-balancing games~\cite{hasanpour2017quantum}, jamming games~\cite{wehner2026jamming}, random CGLMP networks, and uniformly random games. We report selected instances to compare the numerical methods and bounds.

Separately, for the mutually unbiased bases
inequalities~\cite{Colomer2022threenumerical}, we run BGD
with a fixed quantum state at parameter choices for which
that state is known to attain the analytic quantum optimum,
and compare with published numerical results where available.

\subsection{Experimental setup}\label{subsec:eval-setup}
\paragraph{Evaluation protocol.}
Unless stated otherwise, CPU-based BGD, GPU-based BGD, and the see-saw method each perform 10 independently seeded optimization restarts. The three method/device configurations use separate random initializations, so the CPU and GPU BGD restarts do not share initial parameters with each other or with the see-saw restarts. A restart is successful if it reaches the effective tolerance within the $300\,\mathrm{s}$ budget; otherwise it is unsuccessful. For the tabulated optimizer comparisons, BGD uses batch size~$1$. BGD also supports larger batches, which can substantially reduce the wall-clock time for a fixed number of restarts by amortizing fixed tensor-operation launch costs across concurrent initializations (see~\Cref{app:batching}). We retain batch size~$1$ in the tabulated optimizer comparisons so that each restart has a directly comparable time budget and the comparison reflects per-restart optimizer behavior rather than an advantage from parallel execution unavailable to the see-saw method.

The \textbf{success rate} is the fraction of these restarts that reach the effective tolerance before either the convergence criterion or the time limit; the effective tolerance and the corresponding success criterion are defined below.

\paragraph{Reference values and success criterion.}
ach optimizer restart is evaluated against a fixed quantum reference value:
\begin{equation*}
    q_{\mathrm{ref}}=
    \begin{cases}
        \omega_q, & \text{for solved families},\\
        q_{\downarrow}, & \text{for unsolved families}.
    \end{cases}
\end{equation*}
For an unsolved family, $q_{\downarrow}\leq\omega_q$ is the largest feasible value found in 256 independently seeded GPU BGD restarts and serves as a numerical proxy for $\omega_q$. The BGD values in~\Cref{tab:eval-mub} are also maxima over 256 such restarts. These reference calculations use larger batches and run to convergence; the tabulated optimizer comparisons use separate sets of 10 restarts.
For $q_{\mathrm{ref}}\neq0$, we define the \textbf{relative quantum advantage} as
\begin{equation*}
    A\coloneqq \max\left\{0, \frac{q_{\mathrm{ref}}-\omega_c}{\left|q_{\mathrm{ref}}\right|} \right\}.
\end{equation*}
An instance believed to have no quantum advantage ($q_{\mathrm{ref}}=\omega_c$) is assigned $A=0$.
We use a \textbf{nominal tolerance} of $1\%$.
Now, the quantum advantage is the quantity of primary
interest for a nonlocal game, and a success criterion should require the optimizer to resolve this
advantage. However, when the nominal 1\% tolerance is comparable to the advantage, a restart that
merely attains the classical value could count as successful. We therefore instead use the
\textbf{effective tolerance}
\begin{equation*}
    \tau_{\mathrm{eff}}=
    \begin{cases}
        {1\%,} & {A\geq2\%,}\\
        0.1\%, & 0.1\%<A<2\%,\\
        {\text{undefined},} & {0\%<A\leq0.1\%,}\\
        {1\%} & {A=0\%,}
    \end{cases}
\end{equation*}
For $0.1\%<A<2\%$, $\tau_{\mathrm{eff}}=0.1\%<A$, so a restart attaining only the classical value does not meet the success criterion. For $0\%<A\leq0.1\%$, the advantage is too small and we do not use these games for benchmarking. We do benchmark unsolved games with $A=0\%$, for which we use the nominal tolerance. For $A\geq2\%$, the nominal tolerance of $1\%$ is sufficient to resolve the quantum advantage.

Let $v(t)$ denote the best feasible value obtained by a restart up to elapsed time $t$, and let $T$ denote the per-restart time limit. For $q_{\mathrm{ref}}\neq0$, a restart is \textbf{successful} if there exists a time $0\leq t<T$ such that the best feasible value reaches the reference from below within the effective tolerance,
\begin{equation*}
    v(t)\geq q_{\mathrm{ref}}-\tau_{\mathrm{eff}}\left|q_{\mathrm{ref}}\right|.
\end{equation*}
Values above $q_{\mathrm{ref}}$ are counted as successful.
If a restart attains the classical value, $v(t)=\omega_c$, then its relative error equals $A$. %

For each method--instance pair, we report the mean first-hit time among successful restarts. A restart that converges or reaches the time limit outside the effective tolerance window is counted as a failure and contributes no time value.
For solved families,~\Cref{tab:eval-known} additionally reports the analytic values $\omega_c$ and $\omega_q$, whereas for unsolved families,~\Cref{tab:eval-unknown} reports $\omega_c$ and the independently computed lower bound $q_{\downarrow}$.

Lastly, for solved families, we want to compare the largest instances that each optimizer can handle.
We define an instance as \textbf{frontier-eligible} whenever its effective tolerance is defined.
Since we only benchmark solved games with $A>0$, frontier-eligibility implies $A>\tau_{\mathrm{eff}}$. Instances with $0\%<A\leq0.1\%$ have no effective tolerance and are excluded from the frontier comparison.
This condition places the classical
value outside the success region, ensuring that the optimizer is doing more than just solving for
the classical value when a quantum advantage actually exists.
The \textbf{frontier-scale problem} is the largest frontier-eligible instance for which at least one of 10 restarts is successful. By computing the size of the frontier-scale problem for different optimizers, we can quantitatively compare their scalability. This frontier-scale problem measures reachability under the stated restart protocol and is not a guarantee of stable successful optimization of games of that size.

These tolerances and the success criterion are conventions; we do not expect the optimizer comparisons below to depend on the particular choices.

\paragraph{Computational resources.}
All comparisons reported here are performed on a single machine. CPU-based BGD and the see-saw method each use one pinned core of an AMD EPYC 7543. GPU-based BGD and vectorized enumeration of deterministic strategies each use one pinned host CPU core and one NVIDIA A100 with $40\,\mathrm{GB}$ of device memory. Each BGD and see-saw task is limited to $64\,\mathrm{GB}$ of host memory, with swap disabled.

\paragraph{BGD.} BGD is implemented in PyTorch in double precision, using automatic differentiation for the gradients and the Adam-based preconditioner and line search described in~\Cref{app:bgd-eng}. Unless stated otherwise, the shared state and local measurement bases are initialized independently for each restart. Furthermore, by default we use rank-one projective measurements without stochastic post-processing, with each local Hilbert-space dimension equal to the corresponding number of outputs.

For the MUB and jamming families, the shared state is fixed to the maximally entangled state of the prescribed local dimension. For the jamming family, this dimension equals the number of outputs, and BGD optimizes the local projective measurements. For MUB, both local Hilbert spaces have dimension~$d$, the dimension in which the existence of the required MUBs is considered. Because Alice has three outputs, when $d\neq3$, BGD uses stochastic post-processing to map the $d$ outcomes of her projective measurement to the three output labels.

We use Adam hyperparameters $(\beta_1,\beta_2)=(0.9,0.99)$ and $\epsilon=10^{-14}$. The base learning rate is $0.1$ for the solved families and CGLMP and $1.0$ for the remaining non-MUB families. The learning rates for the MUB family are listed in~\Cref{subsec:results-mub}. The learning rates are chosen according to the scale of the utility function and do not require prior knowledge of $\omega_q$.

\paragraph{See-saw.} The see-saw method uses the Julia \texttt{Ket.jl} implementation~\cite{KetJl}. Binary-output families take the solver-independent eigenvalue path; the remaining families solve each POVM subproblem as a semidefinite program with the single-threaded MOSEK interior-point solver~\cite{mosek}.

\paragraph{NPA.} Our initial NPA upper bound is obtained from the augmented first-level relaxation described in~\Cref{app:npa-eng}, whose two partition-supported monomial sets represent every correlator in the objective. When the time budget permits, we tighten the bound by increasing the level~$k$ in the same augmented hierarchy. As shown in \Cref{app:npa-eng}, this hierarchy coincides with the standard hierarchy~\cite{navascues2008convergent} when $k\geq\lceil\Nparty/2\rceil$. Each semidefinite program is solved with MOSEK~\cite{mosek} using 16 CPU threads, up to $384\,\mathrm{GiB}$ of host memory, and a $300\,\mathrm{s}$ time limit. The level column in \Cref{tab:eval-unknown} gives the level of the displayed NPA bound; when bounds are obtained at multiple levels, we report the highest such level. Agreement of this upper bound with the best feasible lower bound, within the applicable tolerance, determines $\omega_q$ to that precision.

\paragraph{Classical.} Classical references are obtained from analytic expressions where available, and otherwise by either vectorized GPU enumeration of deterministic strategies or the MILP formulation of \Cref{subsec:milp}. A value is reported as exact only when the enumeration completes or the mixed-integer solver certifies optimality; otherwise, the best feasible value is reported as a classical lower bound and marked with ``$\geq$''. For the MILP optimizer, we use Gurobi~\cite{gurobi} with 8 CPU threads. We report classical-reference computation separately from the optimizer runtimes.

\subsection{Solved families of Bell inequalities}\label{subsec:results-known}
We first consider solved families of Bell inequalities for which we have exact expressions for the classical and quantum values in the general case. We list the families and give their definitions.
Some families are presented in alternative forms of Bell
expressions; their relation to the nonlocal-game formulation
in~\Cref{section:nolocalgame} is explained
in~\Cref{app:expressions}.
We also denote the game shape $(\Nparty, \Nin, \Nout)$ of each family in the paragraph heading. We sometimes use other variables to describe the game shape, depending on the convention used in the literature.

\paragraph{Bell inequalities tailored to maximally entangled states $(2,m,d)$.}
We first consider a family where we can tune the number of inputs and outputs.
   In~\cite{Salavrakos2017Bell}, a class of Bell inequalities tailored to maximally entangled states was proposed.
   Let $m,d \geq 2$ be integers. In terms of conditional probabilities, the Bell expression is
    \begin{equation*}
    \begin{aligned}
        I_{d,m}:=\sum_{k=0}^{d-1} \alpha_k\Big\{&p(a_1=a_2+k|m,m) + p(a_1+k+1=a_2|1,m) \nonumber\\
        & +\sum_{i=1}^{m-1}\Big[p(a_1=a_2+k|i,i) + p(a_1+k=a_2|i+1,i)\Big] \Big\},    
    \end{aligned}
    \end{equation*}
    where equalities involving outputs are understood modulo $d$
    \begin{equation*}
        p(a_1=a_2+k|s_1,s_2):=\sum_{j=0}^{d-1} p((j+k)\mathrm{mod}~d,j|s_1,s_2)
    \end{equation*}
    and
    \begin{equation*}
        \alpha_k \coloneqq \frac{1}{2d}\tan(\frac{\pi}{2m}) \left\{ \cot\left[\frac{\pi}{d}\left( k + \frac{1}{2m}\right)\right] -  \cot\left[\frac{\pi}{d}\left( \lfloor d/2 \rfloor  + \frac{1}{2m}\right)\right]\right\}.
    \end{equation*}

    The classical value of this expression is given by
    \begin{equation*}
    \begin{aligned}
        &\omega_c = \nonumber\\
        & \frac{1}{2d} \tan\frac{\pi}{2m} \left[ (2m-1) \cot \frac{\pi}{2m d} +\cot[\frac{\pi}{d}(d-1+\frac{1}{2m})]\right]  -\frac{m}{d}   \tan \frac{\pi}{2m} \cot \left[\frac \pi d \left(\lfloor d/2 \rfloor +\frac{1}{2m}\right)\right].
    \end{aligned}
    \end{equation*}
    The quantum value is given by
    \begin{equation*}
        \omega_q = \frac{m}{d} \left[ d - \tan \frac{\pi}{2m} \cot\left[\frac \pi d \left(\lfloor d/2\rfloor +\frac{1}{2m}\right)\right] \right].
    \end{equation*}
    As the name of the game suggests, the quantum value is achieved on the maximally entangled state
    \begin{equation*}
        \vert \Phi_d^+\rangle = \frac{1}{\sqrt{d}}\sum_{i=0}^{d-1} \vert ii\rangle.
    \end{equation*}
    For conciseness, we will refer to this family as ``MaxEnt.''

\paragraph{The MABK inequalities $(n,2,2)$.}\label{section:mabk}
We next consider a family where we can tune the number of parties.
Let $n \geq 2$ be an integer.
We consider here the MABK inequalities~\cite{Mermin1990a,Ardehali1992,BelinskyKlyshko1993a}, which are defined recursively for $n\ge 3$ as
\begin{equation*}
    \mathcal{M}_n = \frac{1}{2} \left[ \mathcal{M}_{n-1} \cdot (A^{(n)}_0 + A^{(n)}_1) + \mathcal{M}_{n-1}^\prime \cdot (A^{(n)}_0 - A^{(n)}_1)  \right] \le 1.
\end{equation*}
Here, $\mathcal{M}_{n-1}^\prime$ is obtained by interchanging $A_0^{(k)}$ and $A_1^{(k)}$ in $\mathcal{M}_{n-1}$ for all $k$,
and
\begin{equation*}
    \mathcal{M}_2 \coloneqq \frac{1}{2}\left(\langle A_0^{(1)}A_0^{(2)}\rangle+\langle A_0^{(1)}A_1^{(2)}\rangle + \langle A_1^{(1)}A_0^{(2)}\rangle - \langle A_1^{(1)}A_1^{(2)}\rangle \right)
\end{equation*}
is the CHSH expression normalized to have classical bound one. The operation $\cdot$ ``appends'' a symbol to a correlator. For example,
\begin{equation*}
    \langle  A_1^{(1)} A^{(2)}_1\rangle \cdot A_0^{(3)}   = \langle A_1^{(1)} A_1^{(2)} A_0^{(3)}\rangle.
\end{equation*}
In particular, $\mathcal{M}_3 \leq 1$ is the GHZ inequality.

The classical value of the MABK inequalities is given by
\begin{equation*}
    \omega_c=1
\end{equation*}
while the quantum value is
\begin{equation*}
    \omega_q = 2^{\frac{n-1}{2}}.
\end{equation*}
Thus the ratio $\omega_q/\omega_c$ grows exponentially
with the number of parties.

\paragraph{The Mermin inequalities $(n,2,2)$.}
The Mermin inequalities~\cite{Mermin1990a} also form a family indexed by the number of parties $n\geq3$. We use their winning-probability form. Each party receives a binary input $s_i\in\{0,1\}$ and returns a binary output $a_i\in\{0,1\}$. The input string $\mathbf{s}=(s_1,\ldots,s_n)$ is drawn uniformly from the strings of even Hamming weight $\abs{\mathbf{s}}\coloneqq\sum_{i=1}^n s_i$. The parties win when their output parity equals half this weight modulo $2$:
\begin{equation*}
    \sum_{i=1}^n a_i \equiv \frac{\abs{\mathbf{s}}}{2} \pmod{2}.
\end{equation*}
The optimal classical winning probability is
\begin{equation*}
    \omega_c = \frac{1}{2} + 2^{-\lceil n/2\rceil}.
\end{equation*}
By contrast, a strategy based on the $n$-party GHZ state wins with certainty, so $\omega_q=1$. The quantum--classical gap therefore approaches $1/2$ as $n$ increases.

\paragraph{The chained inequality $(2,m,2)$. }
Let $m \geq 2$ be an integer.
The chained Bell inequality~\cite{Pearle1970,Braunstein1990Wringing}
is another generalization of the CHSH inequality with $m$ inputs given by
\begin{equation*}
    \sum_{i=1}^m \left( \langle A_iB_i\rangle + \langle A_{i+1}B_{i}\rangle \right) \le 2m-2.
\end{equation*}
The correlator term with $A_{m+1}$ is defined to be $-\langle A_1 B_m \rangle$. The CHSH inequality corresponds to $m=2$ with an appropriate permutation of inputs.

The classical value, as seen above, is
\begin{equation*}
    \omega_c = 2m-2.
\end{equation*}
The quantum value is
\begin{equation*}
    \omega_q = 2m\cos \frac{\pi}{2m},
\end{equation*}
which is achieved with the Bell state $\frac{1}{\sqrt 2}(\vert 00\rangle +\vert 11\rangle)$.

\paragraph{Numerical results}
\Cref{tab:eval-known} compares the frontier-scale instances reached by CPU-based BGD, GPU-based BGD, and the see-saw method. We scale the MaxEnt family along $\Nin=\Nout$, at fixed $\Nin=2$, and at fixed $\Nout=2$; MABK and Mermin are scaled in $\Nparty$, and the chained family in $\Nin$.

\begin{table}[b!]
\centering
\caption{Frontier-scale problems for solved families. For each method and scaling axis, we report the largest frontier-eligible instance reached by at least one restart under the protocol of \Cref{subsec:eval-setup}. The asterisk $^*$ marks a problem as frontier-scale because of the frontier-eligibility condition and not because of the time limit. ``succ.'' is the fraction of successful restarts, and ``time'' is their mean time to tolerance, in seconds.}
\label{tab:eval-known}
\small
\setlength{\tabcolsep}{4pt}
\begin{tabular}{l c c c c c c}
method/device & param & $(\Nparty,\Nin,\Nout)$ & $\omega_c$ & $\omega_q$ & succ. & time \\
\hline
\multicolumn{7}{l}{MaxEnt, $\Nin=\Nout=d$} \\
see-saw/CPU & $d=12$ & $(2,12,12)$ & 11.5457 & 12.0014 & 50\% & 254.28 \\
BGD/CPU & $d=42$ & $(2,42,42)$ & 41.5134 & 42.0000 & 70\% & 269.53 \\
BGD/GPU & $d=140$ & $(2,140,140)$ & 139.504 & 140.000 & 10\% & 289.59 \\
\hline
\multicolumn{7}{l}{MaxEnt, $\Nin=2$} \\
see-saw/CPU & $\Nout=23$ & $(2,2,23)$ & 1.69468 & 1.99703 & 30\% & 249.20 \\
BGD/CPU & $\Nout=292$ & $(2,2,292)$ & 1.69767 & 2.00002 & 10\% & 274.67 \\
BGD/GPU & $\Nout=1407$ & $(2,2,1407)$ & 1.69765 & 2.00000 & 10\% & 284.35 \\
\hline
\multicolumn{7}{l}{MaxEnt, $\Nout=2$} \\
see-saw/CPU & $\Nin=78$ & $(2,78,2)$ & 77.5157 & 78.0079 & 10\% & 1.28 \\
BGD/CPU & $\Nin=498^{*}$ & $(2,498,2)$ & 497.502 & 498.001 & 100\% & 19.67 \\
BGD/GPU & $\Nin=498^{*}$ & $(2,498,2)$ & 497.502 & 498.001 & 100\% & 9.61 \\
\hline
\multicolumn{7}{l}{MABK} \\
see-saw/CPU & $\Nparty=8$ & $(8,2,2)$ & 1.00000 & 11.3137 & 90\% & 57.01 \\
BGD/CPU & $\Nparty=11$ & $(11,2,2)$ & 1.00000 & 32.0000 & 100\% & 98.58 \\
BGD/GPU & $\Nparty=13$ & $(13,2,2)$ & 1.00000 & 64.0000 & 100\% & 80.89 \\
\hline
\multicolumn{7}{l}{Mermin} \\
see-saw/CPU & $\Nparty=8$ & $(8,2,2)$ & 0.56250 & 1.00000 & 90\% & 77.78 \\
BGD/CPU & $\Nparty=11$ & $(11,2,2)$ & 0.51562 & 1.00000 & 100\% & 90.53 \\
BGD/GPU & $\Nparty=13$ & $(13,2,2)$ & 0.50781 & 1.00000 & 90\% & 79.49 \\
\hline
\multicolumn{7}{l}{Chained} \\
see-saw/CPU & $\Nin=78$ & $(2,78,2)$ & 154.000 & 155.968 & 10\% & 1.30 \\
BGD/CPU & $\Nin=998^{*}$ & $(2,998,2)$ & 1994.000 & 1995.998 & 100\% & 211.21 \\
BGD/GPU & $\Nin=998^{*}$ & $(2,998,2)$ & 1994.000 & 1995.998 & 100\% & 26.89 \\
\hline
\end{tabular}
\end{table}

In the CPU comparison, BGD reaches a larger frontier than our see-saw implementation on every scaling axis. The separation is largest along the input and output axes and smaller on the party axis.
This contrast is consistent with the per-iteration costs derived in \Cref{app:bigo-seesaw,app:cost}. For example, the local dimension for MaxEnt is $d=\Nout$, so at fixed $\Nin$ the see-saw cost scales with $\Nout$ polynomially as $n_\mathrm{out}^{3\Nparty}=n_\mathrm{out}^6$. In comparison, the BGD cost~\Cref{eq:bgd-cost-projective} scales as $n_\mathrm{out}^{\Nparty+1}=n_\mathrm{out}^3$. This is a per-iteration square-root speedup under the stated fixed-input, bipartite, $d=\Nout$ conditions.
If we instead scale $\Nin$, the costs for both optimizers grow polynomially as $n_\mathrm{in}^{\Nparty}$ to leading order, but with different prefactors: the see-saw assembly costs $\Nparty(\Nin\Nout)^{\Nparty}d^{2\Nparty}$ per sweep, against $O(n_\mathrm{in}^{\Nparty}d^{\,\Nparty+1})$ for BGD, a ratio on the order of $\Nparty n_\mathrm{out}^{\Nparty}d^{\,\Nparty-1}$, which equals $16$ for the bipartite binary-output instances with $d=2$. The see-saw method therefore times out on smaller instances compared to BGD.
Finally, increasing $\Nparty$ leads to exponential growth in the computational cost of both methods, even at fixed local dimension. This accounts for the smaller separation between BGD and see-saw on the MABK and Mermin families. For BGD, evaluating the Bell functional and its gradients requires an exact tensor contraction whose intermediate tensors grow rapidly with the number of parties. The $13$-party GPU instances of both families are reached within the prescribed budget in most restarts, whereas none of the $14$-party frontier-validation instances reaches the target accuracy within that budget.

The entries marked with an asterisk $^*$ are limited by frontier eligibility rather than by the time limit. Both CPU- and GPU-based BGD can handle sufficiently large-scale Bell inequalities where this is a problem. The GPU reduces the endpoint runtime by approximately $2.0\times$ for MaxEnt and $7.9\times$ for the chained family. On runtime-limited families, GPU-based BGD extends the frontier beyond the CPU result.

\begin{figure}[b!]
\centering

\includegraphics[width=0.54\linewidth]
    {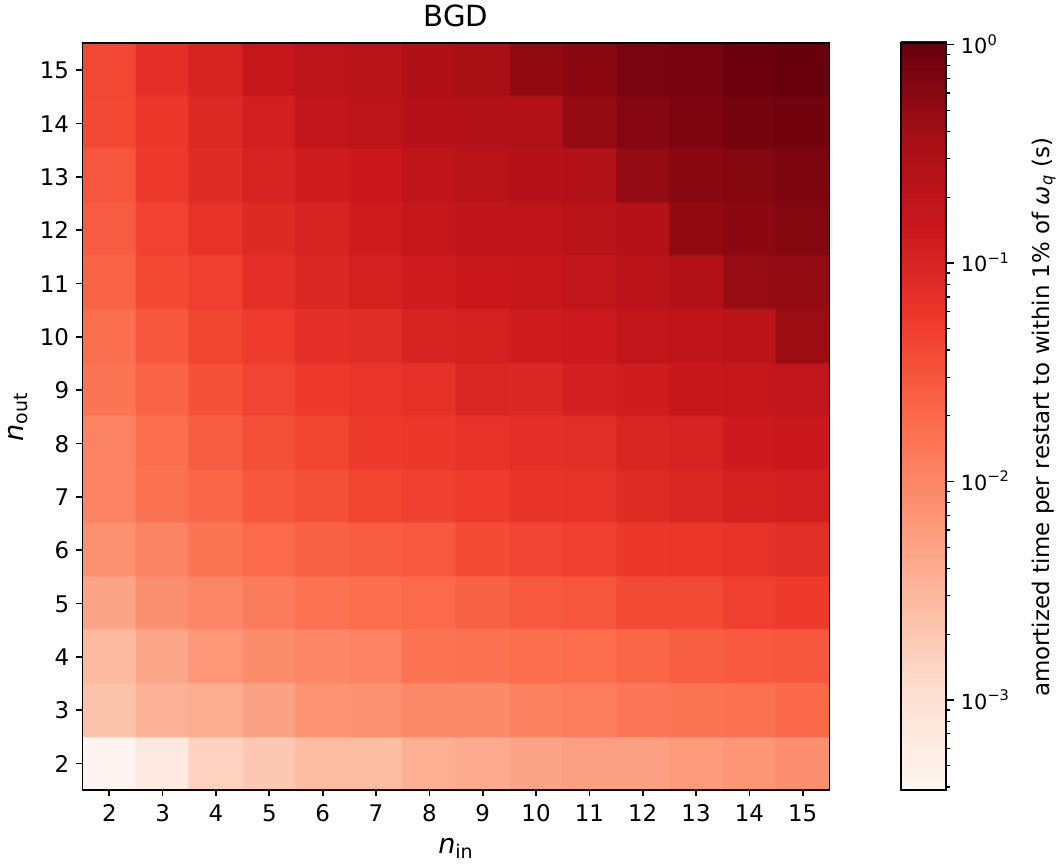}
\caption{Amortized single-core BGD time to reach within
$1\%$ of $\omega_q$ over the MaxEnt grid
$(\Nin,\Nout)\in\{2,\dots,15\}^2$.
Each cell reports the wall-clock time for a batch of
100 independent restarts divided by 100; all cells
reach the target. The color scale is logarithmic.}
\label{fig:scalability-bgd}

\end{figure}

\Cref{fig:scalability-bgd,fig:scalability-seesaw} show how the time to tolerance scales across the MaxEnt parameter grid for each method. The BGD heatmap measures amortized batch throughput, whereas the see-saw heatmap measures successful-restart latency; their numerical times are therefore not directly comparable. Direct comparison is instead provided in \Cref{fig:convergence-solved} for three instances selected so that both profiles remain visible on a linear time axis (in most other instances, BGD completes much earlier than see-saw). BGD advances through many inexpensive gradient steps, whereas see-saw advances through fewer, more expensive alternating updates. Each see-saw trace begins after its first update.

\begin{figure}[t!]
\centering

\includegraphics[width=0.54\linewidth]
    {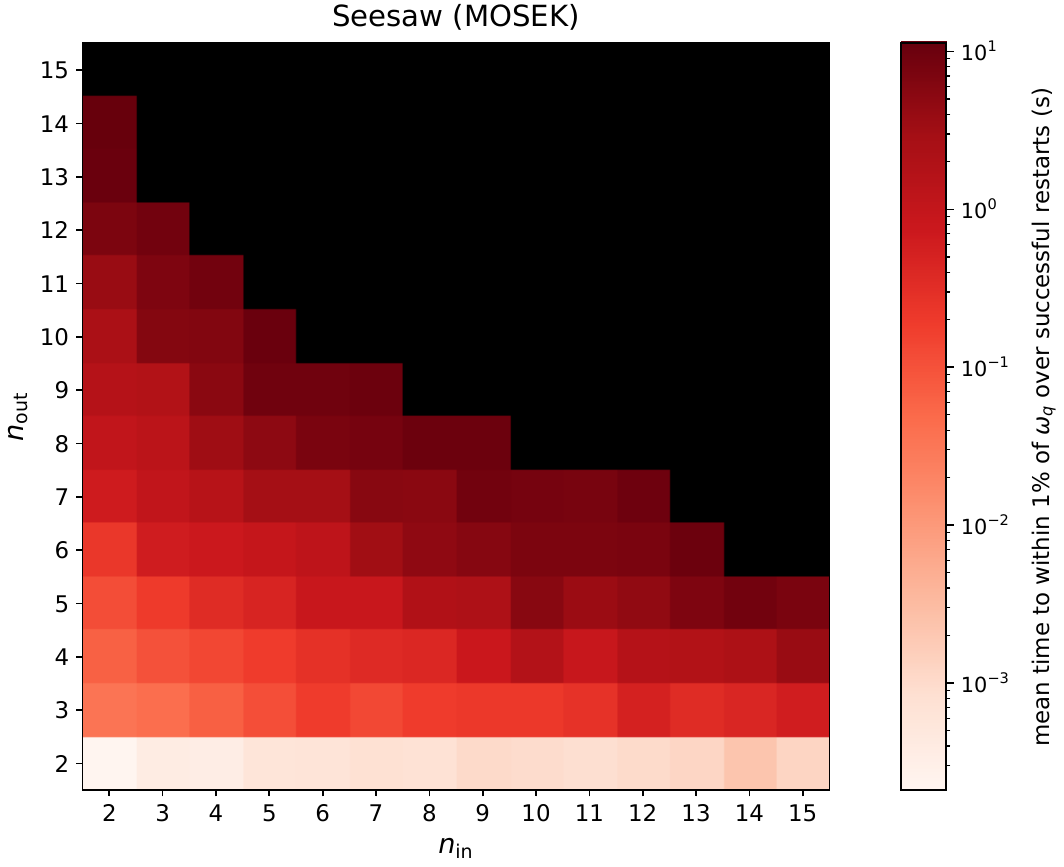}
\caption{Mean single-core see-saw time to reach within
$1\%$ of $\omega_q$ over the MaxEnt grid
$(\Nin,\Nout)\in\{2,\dots,15\}^2$. %
Each cell reports the mean first-hit time over successful
restarts from at most 10 attempts, each limited to
$10\,\mathrm{s}$. Black denotes either an instance for
which no restart reached the target within the time
limit or an instance left untested because a smaller
instance along the corresponding parameter sweep had
already failed. The color scale is logarithmic.}
\label{fig:scalability-seesaw}

\medskip

\includegraphics[width=\linewidth]{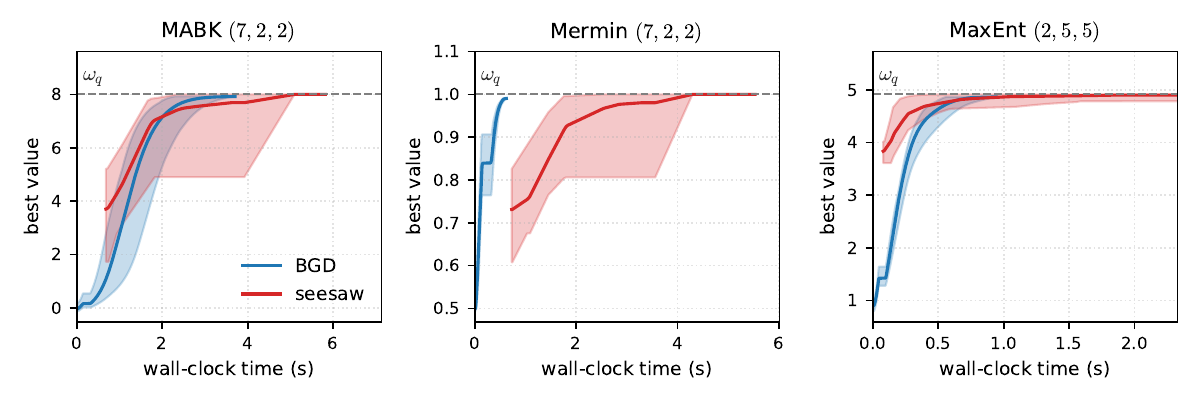}
\caption{Best value against wall-clock time for BGD and see-saw on three solved-family instances under matched, dedicated single-core conditions. Each curve is the mean over 10 independent restarts, the shading spans the minimum and maximum, and the dashed line is the analytic quantum value $\omega_q$. Each panel gives the game shape $(\Nparty,\Nin,\Nout)$.}
\label{fig:convergence-solved}

\end{figure}

\FloatBarrier
\subsection{Unsolved families of Bell inequalities}\label{subsec:results-unknown}
The following families of Bell inequalities lack closed-form expressions for the quantum value in general. For some families, the classical value is also unknown.

\paragraph{The CGLMP inequalities $(2,2,d)$. }
A well-known class of inequalities with a tunable number of outputs is the CGLMP inequalities~\cite{CollinsEtAl2002a}. Let $d \geq 2$ be an integer. For measurement settings $a,b\in\{1,2\}$, define
\begin{equation*}
    p(A_a=B_b+k) \coloneqq \sum_{j=0}^{d-1} p\bigl(A_a=(j+k)\bmod d,\,B_b=j\bigr).
\end{equation*}
Then, the CGLMP Bell expressions are given by
\begin{equation*}
\begin{aligned}
    &I_d = \nonumber\\
    & \sum_{k=0}^{\lceil d/2\rceil -1} \left(1- \frac{2k}{d-1}\right) \big[
    p(A_1=B_1+k) + p(B_1 =A_2+k+1) +p(A_2=B_2+k)  + p(B_2 = A_1+k)\nonumber\\
    &- p(A_1=B_1-k-1) - p(B_1 = A_2-k) - p(A_2=B_2-k-1)-p(B_2 = A_1-k-1)
    \big]
\end{aligned}
\end{equation*}

The classical value is $\omega_c=2$. For the maximally entangled (ME) state
\begin{equation*}
    \ket{\Phi_d^+}=\frac{1}{\sqrt d}\sum_{j=0}^{d-1}\ket{jj}
\end{equation*}
and the standard CGLMP measurements~\cite{CollinsEtAl2002a}, the Bell expression takes the value
\begin{equation*}
    I_d^{\mathrm{ME}} \coloneqq 4d \sum_{k=0}^{\lceil d/2\rceil -1} \left(1- \frac{2k}{d-1}\right) (q_k-q_{-k-1}),
    \qquad
    q_k  \coloneqq \frac{1}{2d^3 \sin^2[\frac \pi d(k+ \frac 1 4 )]},
\end{equation*}
Its asymptotic value is
\begin{equation*}
    I_\infty^{\mathrm{ME}} \coloneqq \lim_{d\to\infty} I_d^{\mathrm{ME}}
    =\frac{32G}{\pi^2}\approx2.96981,
\end{equation*}
where $G$ is Catalan's constant. For each finite $d$, $I_d^{\mathrm{ME}}$ is a feasible quantum value but not, in general, the quantum optimum. Already for $d=3$, the maximally entangled strategy gives $I_3^{\mathrm{ME}}\approx2.87293$, whereas a nonmaximally entangled strategy attains $1+\sqrt{11/3}\approx2.91485$~\cite{AcinEtAl2002}. We therefore treat CGLMP as a family without a known closed-form expression for $\omega_q$.

\paragraph{Load balancing $(2, \Nin, \Nout)$. }
The load-balancing game is a nonlocal game motivated by a real-world coordination problem.
We generalize the nonlocal game defined in~\cite{hasanpour2017quantum} inspired by ad hoc networks, in which two transmitters must choose one of $\Nout$ shared transmission channels without communicating. Each party's input encodes the volume of data it wishes to transmit. Each channel has a capacity. The parties must coordinate their choices so that the capacity of a channel is not exceeded (to avoid congestion) and so that they minimize the overall number of channels used (to conserve resources).
This game isolates the coordination constraint relevant to load balancing and allows us to compare the optimal classical and quantum winning probabilities within the model. Applying our BGD optimizer to this problem demonstrates its use as a numerical tool for real-world multi-agent coordination problems.

Let $\Nin=2k+1$ with $1\leq k\leq20$. The first transmitter has data rate index $s$ and the second transmitter has data rate index $t$, drawn independently and uniformly from $\{-k,-k+1,\ldots,k\}$, corresponding to data rates
\begin{equation*}
    r(s)=1+\delta s,
    \qquad
    r(t)=1+\delta t,
    \qquad
    \delta=0.05.
\end{equation*}
The parties output channel labels $a,b\in\{0,1,\ldots,\Nout-1\}$. Assuming all channels have capacity $C$,
the utility is
\begin{equation}
    V(a,b\vert s,t)=
    \begin{cases}
        \mathbbm 1[a=b], & r(s)+r(t)\leq C,\\
        \mathbbm 1[a\neq b], & r(s)+r(t)>C.
    \end{cases}
    \label{eq:load-balancing-utility}
\end{equation}
Thus matching labels represent sharing a channel at or below capacity, while distinct labels represent splitting traffic above capacity. We will assume $C=2$, reducing the rate condition to $s+t \leq 0$. The classical and quantum values are not known in general for load balancing problems.

The same equality--disequality predicate admits a two-input specialization, although this case is not part of the odd-input family used in our numerical study. For uniformly distributed inputs $x,y\in\{0,1\}$, the parties should output the same channel on $(0,0),(0,1),(1,0)$ and different channels on $(1,1)$. This is the CHSH winning condition with an arbitrary output alphabet. We prove that additional output labels do not change either optimum:
\begin{thm}[Two-input multi-output load balancing]
\label{thm:multi-output-chsh}
For every \(\Nout\in\N\) with \(\Nout\ge2\), the classical and quantum values of the two-input load-balancing game are
\begin{equation}
    \omega_c=\frac34,
    \qquad
    \omega_q=\frac{2+\sqrt2}{4}=\cos^2\frac{\pi}{8}.
    \label{eq:lbchsh-values}
\end{equation}
\end{thm}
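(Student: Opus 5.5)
We prove the classical and quantum statements separately, in each case matching a lower bound from an explicit strategy with an upper bound valid for every \(\Nout\ge2\).

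\textbf{Classical value.} For the lower bound, both parties always output label \(0\). This wins on the three inputs \((0,0),(0,1),(1,0)\), giving \(3/4\). For the upper bound, consider any deterministic strategy \(f,g\). If it wins on all four inputs, then \(f(0)=g(0)=f(1)=g(1)\) and simultaneously \(f(1)\neq g(1)\), a contradiction. So at most three of the four equally likely inputs are won. Since classical behaviors are mixtures of deterministic ones, \(\omega_c=3/4\) for every \(\Nout\).

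\textbf{Quantum lower bound.} We use only the labels \(\{0,1\}\) and the standard optimal CHSH strategy on \(\ket{\Phi_2^+}\). Because equality and inequality of labels coincide with the CHSH parity condition when outputs are binary, this achieves \(\cos^2\frac\pi8\).

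\textbf{Quantum upper bound.} This is the main step. Fix a finite-dimensional strategy with state \(\ket\psi\) and projective measurements \(\{P^x_a\}_{a}\) for Alice and \(\{Q^y_b\}_{b}\) for Bob. We reduce to CHSH with \(\pm1\) observables as follows. For each input, let Alice choose a uniformly random function \(h:[\Nout]\to\{\pm1\}\), shared with Bob, and output \(A_x=\sum_a h(a)P^x_a\); Bob likewise outputs \(B_y=\sum_b h(b)Q^y_b\). If \(a=b\), then \(h(a)h(b)=1\) always. If \(a\ne b\), then \(h(a)h(b)\) is \(\pm1\) with equal probability, so its expectation is \(0\). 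Thus
\[
\mathbb E_h\langle A_xB_y\rangle=\Pr[a=b\mid x,y]=:e_{xy}.
\]
The winning probability is
\[
\tfrac14\left(e_{00}+e_{01}+e_{10}+1-e_{11}\right)=\tfrac14+\tfrac14\,\mathbb E_h\,\langle \mathrm{CHSH}_h\rangle,
\]
where \(\mathrm{CHSH}_h=A_0B_0+A_0B_1+A_1B_0-A_1B_1\) for the fixed \(h\). Each \(A_x,B_y\) is a \(\pm1\)-valued observable on a tensor product, so Tsirelson's bound gives \(\langle\mathrm{CHSH}_h\rangle\le2\sqrt2\) for every \(h\). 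Hence the winning probability is at most \(\frac14+\frac{\sqrt2}{4}\).

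This is not yet the claimed bound, since \(\frac14+\frac{\sqrt2}{4}<\frac{2+\sqrt2}{4}\). The first step loses a constant because, for unequal outputs, \(h(a)h(b)\) only averages to \(0\), whereas the game wants it to contribute \(-1\). The correct accounting instead uses \(2e_{xy}-1\) as the CHSH-like correlator: the winning probability equals
\[
\tfrac12+\tfrac18\sum_{x,y}(-1)^{xy}\,(2e_{xy}-1),
\]
and the goal is to bound \(\sum_{x,y}(-1)^{xy}(2e_{xy}-1)\le2\sqrt2\). Here \(2e_{xy}-1=\bra\psi X_{xy}\ket\psi\) with \(X_{xy}=\sum_a 2P^x_a\ot Q^y_a-\id\), an operator of norm at most \(1\) but not of product form.

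To get the bound, we would write
\[
e_{xy}=\sum_a\bra\psi P^x_a\ot Q^y_a\ket\psi=\langle u_x,v_y\rangle,
\]
where \(u_x=\bigoplus_a (P^x_a\ot\id)\ket\psi\) and \(v_y=\bigoplus_a(\id\ot Q^y_a)\ket\psi\) are vectors with \(\norm{u_x}=\norm{v_y}=1\) and real, nonnegative inner products. Then
\[
2e_{xy}-1=\langle u_x,v_y\rangle-\tfrac12\norm{u_x-v_y}^2+\tfrac12\cdot0,
\]
a mixed form from which no bound follows directly. The cleaner route is to show that the winning probability is at most the Tsirelson-type vector bound
\[
\max\ \tfrac18\sum_{x,y}(-1)^{xy}\,\bigl(2\langle u_x,v_y\rangle-1\bigr)
\]
over unit vectors with nonnegative inner products, and then optimize it geometrically. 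The constraint is \(\langle u_x,v_y\rangle\in[0,1]\), and we want to maximize
\[
\langle u_0,v_0\rangle+\langle u_0,v_1\rangle+\langle u_1,v_0\rangle-\langle u_1,v_1\rangle.
\]
Placing the vectors at angles in a plane separated by \(\pi/4\), with inner products \(\cos^2\) of the relevant angle via the squared-overlap structure, recovers \(\cos^2\frac\pi8\). We would prove the upper bound by the triangle inequality for angles \(\theta_{xy}=\arccos\sqrt{e_{xy}}\), using the Fubini–Study metric and the fact that
\[
\sqrt{e_{xy}}\ge\bigl|\langle \tilde u_x,\tilde v_y\rangle\bigr|
\]
for suitable unit vectors. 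From the chain
\[
\theta_{11}\le\theta_{10}+\theta_{00}+\theta_{01}
\]
together with concavity, we then obtain
\[
\tfrac14\sum_{x,y}\Pr[\text{win}\mid x,y]\le\cos^2\tfrac\pi8.
\]
Establishing this angle-triangle inequality for the multi-outcome equality probabilities is the step we expect to be the main obstacle.
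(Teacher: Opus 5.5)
\paragraph{Assessment.} There is a genuine gap: the quantum upper bound is not proved. Your classical argument and your CHSH embedding for the quantum lower bound are correct and match the paper.

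\paragraph{The two preliminary steps do not help.}
\begin{itemize}
\item The hashing step contains a slip. Since $\mathbb E_h[h(a)h(b)]=\delta_{ab}$, Tsirelson's bound gives $e_{00}+e_{01}+e_{10}-e_{11}\le 2\sqrt2$. The resulting bound on the winning probability is $\frac{1+2\sqrt2}{4}\approx 0.957$, which is \emph{weaker} than $\cos^2\frac\pi8$. The value $\frac{1+\sqrt2}{4}$ you wrote lies below $\omega_c=\frac34$ and below the strategy you just exhibited, so it cannot be an upper bound.
\item The vector relaxation you fall back on is not tight. Take unit vectors $u_1,v_0,u_0,v_1$ in a plane at angles $0,\frac\pi6,\frac\pi3,\frac\pi2$. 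All their inner products are nonnegative, yet $e_{00}+e_{01}+e_{10}-e_{11}=\frac{3\sqrt3}{2}>1+\sqrt2$. So the constraints $\langle u_x,v_y\rangle=e_{xy}\in[0,1]$ alone cannot give the theorem; more operator structure must be used.
\end{itemize}

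\paragraph{The angle inequality is unproven.} What remains is $\theta_{11}\le\theta_{00}+\theta_{01}+\theta_{10}$ for $\theta_{xy}=\arccos\sqrt{e_{xy}}$. This is the whole content of the upper bound: it is one of the Tsirelson--Landau--Masanes inequalities for the numbers $2e_{xy}-1$, and your concavity step turns it into the theorem.
\begin{itemize}
\item Your proposed justification does not establish it. An inequality $\sqrt{e_{xy}}\ge\abs{\langle\tilde u_x,\tilde v_y\rangle}$ gives $\theta_{xy}\le\tilde\theta_{xy}$ for every pair. That bounds the left-hand angle $\theta_{11}$ in the right direction, but goes the wrong way for the three angles on the right-hand side.
\item Nor can the inequality be inherited from the binary case. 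For $\Nout\ge3$ the set of attainable $(e_{xy})$ is strictly larger: for example, $(1,0,0,0)$ is attained by the labels $a_0=b_0=0$, $b_1=1$, $a_1=2$, which is impossible with two labels. So Tsirelson's binary characterization cannot be invoked.
\end{itemize}

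\paragraph{How the paper avoids this.} It uses no nonlinear metric argument. For each label $r$ it proves the operator inequality
\[
PR+PS+QR-QS\preceq\tfrac{2+\sqrt2}{4}(P+R)+\tfrac{\sqrt2}{4}(Q+S),
\]
with $P=A_0^r$, $Q=A_1^r$, $R=B_0^r$, $S=B_1^r$ the projectors for label $r$. The right-hand side is \emph{linear} in the local projectors. Summing over $r$ and using measurement completeness therefore bounds the Bell operator of $e_{00}+e_{01}+e_{10}-e_{11}$ by $(1+\sqrt2)I$, independently of $\Nout$. The per-label inequality itself is verified via Jordan's lemma and an explicit positivity check (leading principal minors) of the resulting $4\times4$ matrices.
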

\begin{proof}
    We relegate the proof to~\Cref{app:multioutput-chsh}.
\end{proof}

\paragraph{The $(n,k)$ jamming game $(2,\binom{n}{n-k}, n-k)$. }
The jamming game is a two-party coordination task modeling transmission over a contested spectrum~\cite{wehner2026jamming}. An adversary leaves each party a private \emph{safe set} of $d\coloneqq n-k$ usable bands among $n$ frequency bands. Each party receives their safe set as input and outputs an index $a\in\{0,\dots,d-1\}$ selecting one of the safe bands. Each party cannot know the other party's safe set because by construction, they cannot communicate.
Writing $S_x$ for the safe set indexed by input $x$, of which there are $\binom{n}{n-k}$ possibilities, and $S_x[a]$ for its $a$-th band, the parties win when they transmit on the same frequency band:
\begin{equation*}
    V(a,b\vert x,y) =
    \begin{cases}
        1 & S_x[a] =  S_y[b] \\
        0 & \text{otherwise},
    \end{cases}
\end{equation*}
with inputs drawn uniformly. The classical value is given exactly by~\cite{wehner2026jamming}
\begin{equation}
    \omega_c = \frac{1}{\binom{n}{n-k}^2}\sum_{i=0}^{k} \binom{n-1-i}{\,k-i\,}^2 ,
    \label{eq:jamming-classical-value}
\end{equation}
while the quantum value has no general closed form. %
This is another example of a nonlocal game motivated by a real-world application for which our BGD optimizer can serve as a numerical tool.

\paragraph{Random CGLMP network $(\Nparty,2,\Nout)$. }
This family defines multipartite Bell expressions by placing bipartite CGLMP expressions on the edges of a graph. It provides instances with less structure than the individual CGLMP inequalities while retaining interpretability of the game constructed.
Over $\Nparty$ parties with binary inputs and $\Nout$ outputs, we draw a random connected graph whose vertices are the parties, with $\Nparty-1$ edges and maximum degree at most three---a random subcubic tree. We place a copy of the $\Nout$-outcome CGLMP functional on each edge and independently relabel the outputs at its two endpoints. The utility is the average of the edge functionals:
\begin{equation*}
    \mathcal V(\mathbf a\vert \mathbf s) = \frac{1}{\abs{E}}\sum_{e=(i,j)\in E} \mathcal V_{\Nout}^{\mathrm{CGLMP}}\big(\pi_{i,e}(a_i),\pi_{j,e}(a_j)\vert s_i,s_j\big),
\end{equation*}
where $\mathcal V_{\Nout}^{\mathrm{CGLMP}}$ is the coefficient function of the $\Nout$-outcome CGLMP expression defined in~\cite{CollinsEtAl2002a}, and $\pi_{i,e}$ is the output permutation assigned to endpoint $i$ of edge $e$. The global input tuple $\mathbf s\in\{0,1\}^{\Nparty}$ is sampled uniformly.
Note the graph specifies the Bell functional only: the optimization permits an arbitrary state shared by all parties and imposes no independent-source constraint. The degree bound limits the number of pairwise CGLMP terms incident on any party, thereby reducing the competition among incompatible correlations on adjacent edges. No closed-form expression for the quantum value of this random family is known.%

\paragraph{Uniformly random game $(\Nparty,\Nin,\Nout)$. }
To probe the optimizers on Bell inequalities with no prescribed structure whatsoever, we sample a Bell functional
\begin{equation}
    \mathcal B_R[p]
    \coloneqq
    \sum_{\mathbf s\in[\Nin]^{\Nparty}}
    \sum_{\mathbf a\in[\Nout]^{\Nparty}}
    R_{\mathbf s,\mathbf a}\,p(\mathbf a\vert\mathbf s),
    \qquad
    R_{\mathbf s,\mathbf a}\overset{\mathrm{i.i.d.}}{\sim}\operatorname{Unif}[-1,1].
    \label{eq:uniform-random-game}
\end{equation}
The implementation and our numerical experiments use the unnormalized functional $\mathcal B_R$; dividing it by $\Nin^{\Nparty}$ gives the expected utility for uniformly distributed inputs and payoff $R_{\mathbf s,\mathbf a}$. The tabulated instances have $(\Nin,\Nout)=(2,2)$. These random instances have no known closed-form classical or quantum value.

\paragraph{Numerical results}
\Cref{tab:eval-unknown} reports two parameter choices for each family. The smaller instance supports a matched optimizer comparison between the see-saw method and the BGD optimizer, while the larger instance probes a regime in which see-saw fails while GPU BGD still returns a feasible lower bound. 

Because the exact value of $\omega_q$ is unknown for these families, we bracket it using complementary bounds. BGD and the see-saw method provide lower bounds through explicit feasible strategies, while the NPA hierarchy provides an upper bound whenever it can be completed within the resource limit.
A strict inequality $q_{\downarrow}>\omega_c$ certifies a Bell violation whenever the displayed classical value is exact. Agreement between $q_{\downarrow}$ and the NPA upper bound determines $\omega_q$ to the displayed precision.

The classical values for the jamming and CGLMP families are known. GPU enumeration can compute the classical values of the random CGLMP network and uniformly random families, including the nine-party binary-input, binary-output uniformly random game. For the $k=6$ load balancing instance, we found that exhaustive enumeration is infeasible, while our MILP optimizer can certify optimality. \emph{This shows that our MILP solver can possibly handle larger games compared to brute force search}. For the larger $k=20$ load balancing instance, neither GPU enumeration nor the MILP solver could certify optimality.
We instead compute a lower bound on $\omega_c$ in the same way as how we compute $q_\downarrow$: using our BGD optimizer with 256 random initializations, but where the quantum dimension is set to $d=1$ (With the stochastic map, this results in a classical strategy.). Note only for this one instance we use a lower bound on the classical value as a numerical proxy for the classical value.

\begin{table}[t!]
\centering
\caption{Unsolved-family instances at the selected parameter points. ``RandCGLMP'' refers to the random CGLMP network family.
``succ.'' and ``time'' denote the successful-restart fraction and mean time to tolerance, respectively; ``lvl'' is the completed NPA level. A dash ``---'' denotes an unavailable result or an NPA bound not obtained within the resource limit, and ``$\geq$'' marks a classical lower bound.}
\label{tab:eval-unknown}
\small
\renewcommand{\arraystretch}{1.15}
\setlength{\tabcolsep}{3pt}
\begin{tabular}{l c c c c c c c c c c c}
 & & & & \multicolumn{2}{c}{BGD/CPU} & \multicolumn{2}{c}{BGD/GPU} & \multicolumn{2}{c}{see-saw} & \multicolumn{2}{c}{NPA} \\
\cmidrule(lr{2pt}){5-6}
\cmidrule(lr{2pt}){7-8}
\cmidrule(lr{2pt}){9-10}
\cmidrule(lr{2pt}){11-12}
param & $(\Nparty,\Nin,\Nout)$ & $\omega_c$ & $q_{\downarrow}$ & succ. & time & succ. & time & succ. & time & lvl & bound \\
\hline
\multicolumn{12}{l}{CGLMP} \\
\quad $d{=}18$  & $(2,2,18)$  & 2.00000 & 3.23346 & 50\%  & 0.711 & 50\%  & 1.488 & 100\% & 185.155 & 1 & 7.62176 \\
\quad $d{=}179$ & $(2,2,179)$ & 2.00000 & 3.43012 & 100\% & 211.084 & 100\% & 4.414 & 0\%    & --- & --- & --- \\
\hline
\multicolumn{12}{l}{Load balancing, $\Nin{=}\Nout$} \\
\quad $k{=}6$ & $(2,13,13)$ & 0.82249 & 0.84064 & 60\%  & 1.345 & 60\%  & 1.638 & 100\% & 101.671 & --- & --- \\
\quad $k{=}20$ & $(2,41,41)$ & $\geq 0.82927$ & 0.84715 & 100\% & 149.139  & 100\% & 2.943 & 0\%    & ---  & --- & --- \\
\hline
\multicolumn{12}{l}{Jamming, $\Nout{=}4$} \\
\quad $n{=}11, d=4$ & $(2,330,4)$  & 0.24182 & 0.25389 & 100\% & 12.947 & 100\% & 3.464 & 60\% & 51.354 & --- & --- \\
\quad $n{=}16, d=4$ & $(2,1820,4)$ & 0.15797 & 0.16748 & 0\% & ---  & 100\% & 12.153 & 0\% & ---  & --- & --- \\
\hline
\multicolumn{12}{l}{RandCGLMP, $\Nparty{=}4$} \\
\quad $\Nout{=}4$ & $(4,2,4)$ & 0.50000 & 0.54602 & 90\% & 1.088 & 90\% & 3.156 & 10\% & 28.685 & 2 & 0.55667 \\
\quad $\Nout{=}6$ & $(4,2,6)$ & 0.50000 & 0.58762 & 50\% & 1.979 & 50\% & 4.038 & 0\% & --- & 1 & 0.99091 \\
\hline
\multicolumn{12}{l}{Uniformly random} \\
\quad $\Nparty{=}7$ & $(7,2,2)$ & 27.5956 & 27.5956 & 0\%   & ---  & 20\% & 3.369 & 10\% & 7.517 & 3 & 27.5956 \\
\quad $\Nparty{=}9$ & $(9,2,2)$ & 63.9107 & 63.9107 & 10\% & 8.033 & 20\% & 3.916 & 0\%   & ---  & --- & --- \\
\hline
\end{tabular}
\end{table}

For CGLMP, all methods reach the reference at $d=18$, whereas at $d=179$ both BGD variants attain unit success and the see-saw method does not achieve the effective tolerance within the resource limit.

For load balancing with capacity $C=2.0$, see-saw succeeds on the smaller instance but not on the larger one; meanwhile, both BGD variants succeed at both sizes. \emph{Such computed values can directly
inform how load balancing can be performed using quantum nonlocality in real-world systems}. 

For the jamming family with $\Nout=4$, both BGD variants succeed at $n=11$, while at $n=16$ GPU BGD succeeds in all ten restarts and CPU BGD succeeds in none. The see-saw success rate is $60\%$ at $n=11$ and $0\%$ at $n=16$. At $n=16$, $q_{\downarrow}=0.16748$ exceeds the exact classical value $\omega_c=0.15797$ from~\Cref{eq:jamming-classical-value} by approximately 6.0\%, certifying a Bell violation. 

For the selected random CGLMP networks, $q_{\downarrow}$ exceeds
the exact classical value $0.50000$ for both the $(4,2,4)$ star
instance and the $(4,2,6)$ path instance. Note that the overall utility of the network averages the CGLMP functionals on each edge, so its normalization differs from the standalone CGLMP value. BGD reaches the reference target in both instances, while
see-saw succeeds only for the smaller $(4,2,4)$ star instance under
the stated protocol. The sampled networks and their output relabelings are specified in~\Cref{app:cglmpgraph-instances}.

For the uniformly random family, the $q_\downarrow$ values equal the exact classical values to the displayed precision. At seven parties, the level-3 NPA upper bound agrees with the classical value, certifying $\omega_q=\omega_c$ to that precision. At nine parties, only BGD reaches the reference. Because the two party counts correspond to different random Bell functionals, these restart rates should not be interpreted as a monotonic trend with the number of parties. The higher CPU success rate at nine parties can be attributed to the randomness of the optimization.

\Cref{fig:convergence-unsolved} compares four representative unsolved-family instances selected so that both methods remain visible on a linear time axis. The CGLMP and load-balancing panels use matched dedicated restart sets; the jamming panel uses a separate restart set at $(n,k)=(6,2)$, with game shape $(2,15,4)$.

\begin{figure}[htbp!]
\centering
\includegraphics[width=0.93\linewidth]{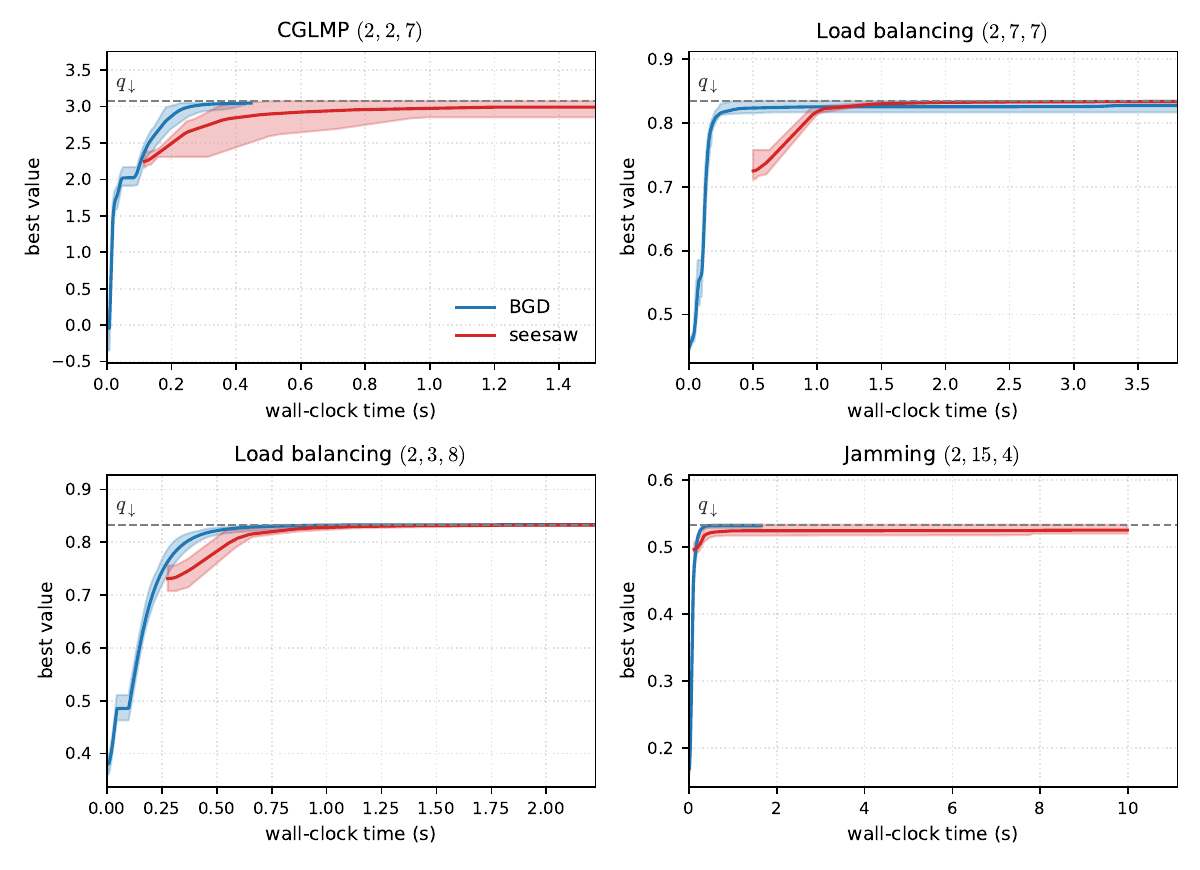}
\caption{Best value against wall-clock time for BGD and see-saw on four unsolved-family instances. Each curve is the mean over 10 independent
restarts, the shading spans the minimum and maximum, and the dashed line is the feasible numerical reference $q_\downarrow$. Each panel gives the game shape $(\Nparty,\Nin,\Nout)$.}
\label{fig:convergence-unsolved}
\end{figure}

\paragraph{Jamming family: comparison with seed-frame optimization.}
We additionally compare BGD with the specialized seed-frame optimizer of~\cite{wehner2026jamming}. Just as in their study, we fix the shared state to the maximally entangled state. The seed-frame optimizer imposes structure on the measurements: Bob's basis is the complex conjugate of Alice's for each safe set. In the seed-frame ansatz, $n$ seed vectors determine the measurement bases through L\"owdin orthogonalization. BGD in contrast varies Alice's and Bob's projective bases independently, thereby searching a broader class of measurement strategies than the seed-frame ansatz. We compare with the seed-frame optimizer because it always attains the best feasible value compared to other proposed optimizers to the displayed precision for the game parameters chosen, as shown in Table VI of~\cite{wehner2026jamming}.

\begin{table}[htbp!]
\centering
\caption{BGD comparison with the optimized seed-frame values reported in Table~I of~\cite{wehner2026jamming}. The values are winning probabilities. All optimizers fix the shared quantum state to be the maximally entangled state.}
\label{tab:jamming-wehner-comparison}
\small
\renewcommand{\arraystretch}{1.05}
\setlength{\tabcolsep}{4pt}
\begin{tabular}{c c c c c c c}
\toprule
$n$ & $d$ & $k$ & $\omega_c$ & Opt. & BGD/CPU & BGD/GPU \\
\midrule
3 & 2 & 1 & 0.5556 & 0.5833 & 0.5833 & 0.5833 \\
4 & 2 & 2 & 0.3889 & 0.4167 & 0.4167 & 0.4167 \\
5 & 2 & 3 & 0.3000 & 0.3230 & 0.3230 & 0.3230 \\
6 & 2 & 4 & 0.2444 & 0.2644 & 0.2644 & 0.2644 \\
4 & 3 & 1 & 0.6250 & 0.6451 & 0.6451 & 0.6451 \\
5 & 3 & 2 & 0.4600 & 0.4808 & 0.4809 & 0.4809 \\
6 & 3 & 3 & 0.3650 & 0.3852 & 0.3852 & 0.3852 \\
5 & 4 & 1 & 0.6800 & 0.6909 & 0.6909 & 0.6909 \\
6 & 4 & 2 & 0.5200 & 0.5333 & 0.5333 & 0.5333 \\
6 & 5 & 1 & 0.7222 & 0.7260 & 0.7260 & 0.7260 \\
7 & 6 & 1 & 0.7551 & 0.7537 & 0.7551 & 0.7551 \\
\bottomrule
\end{tabular}
\end{table}

\Cref{tab:jamming-wehner-comparison} compares the feasible values of the two optimizers.
In particular, we use Table~I of~\cite{wehner2026jamming} for the numerically optimized seed-frame values, denoted there by ``Opt.'', for eleven sets of parameters.
At the displayed precision, nine rows match exactly. The $(n,d,k)=(5,3,2)$ row is $0.4808$ versus $0.4809$. For $(n,d,k)=(7,6,1)$, BGD raises the feasible value from $0.7537$ to $0.7551$, reaching $\omega_c$ to numerical precision.
Hence our general purpose BGD optimizer performs at least as well as the specialized optimizer on these instances.
Note that because the methods search different measurement classes, their attained feasible values are determined by the numerical optimizer quality and the choice of the measurement bases ansatz.

The corresponding runtime data are listed separately in~\Cref{tab:jamming-wehner-runtime}.
Because~\cite{wehner2026jamming} does not report runtimes, we run the public seed-frame implementation accompanying that work~\cite{quantumjamming} on our specified CPU and measure its wall-clock time under the same ten-restart protocol used for BGD. The resulting runtimes are directly comparable with the CPU BGD times in~\Cref{tab:jamming-wehner-runtime}.
The seed-frame restarts show steep runtime growth as the number of frequency bands and safe sets increase, whereas BGD restarts remain in the low-single-digit-second range on these instances. CPU BGD is faster than GPU BGD throughout this set, consistent with the GPU launch overhead at these small problem sizes. Both BGD implementations are substantially faster than the seed-frame restarts once the latter becomes expensive.

\begin{table}[htbp!]
\centering
\caption{Mean wall-clock time, in seconds, over ten valid restarts.}
\label{tab:jamming-wehner-runtime}
\small
\renewcommand{\arraystretch}{1.05}
\setlength{\tabcolsep}{4pt}
\begin{tabular}{c c c c c c}
\toprule
$n$ & $d$ & $k$ & seed/CPU & BGD/CPU & BGD/GPU \\
\midrule
3 & 2 & 1 & 0.116 & 0.871 & 2.097 \\
4 & 2 & 2 & 0.705 & 0.983 & 2.317 \\
5 & 2 & 3 & 2.213 & 1.016 & 2.300 \\
6 & 2 & 4 & 6.009 & 1.294 & 2.621 \\
4 & 3 & 1 & 0.814 & 1.308 & 2.343 \\
5 & 3 & 2 & 35.436 & 1.623 & 3.702 \\
6 & 3 & 3 & 287.104 & 1.727 & 2.919 \\
5 & 4 & 1 & 3.395 & 1.623 & 3.347 \\
6 & 4 & 2 & 3.303 & 2.517 & 4.117 \\
6 & 5 & 1 & 4.350 & 2.010 & 3.571 \\
7 & 6 & 1 & 15.654 & 2.553 & 5.382 \\
\bottomrule
\end{tabular}
\end{table}

\subsection{Mutually unbiased bases inequalities}\label{subsec:results-mub}
We finally consider the mutually unbiased basis (MUB) inequalities~\cite{Colomer2022threenumerical}.

\paragraph{Mutually unbiased bases inequalities $(2, [n(n-1)d^2/2,n], [3,d])$.}
Let $n,d\geq 2$ be integers. A family of orthonormal bases $\{\mathcal{B}_y\}_{y=0}^{n-1}$ in $\C^d$, with
\begin{equation*}
    \mathcal{B}_y=\{\ket{b^y_0},\ket{b^y_1},\cdots,\ket{b^y_{d-1}}\},
\end{equation*}
is called a family of \textbf{mutually unbiased bases} (MUBs) if for any $y\neq z$,
\begin{equation*}
    \abs{\langle b^y_j \vert b^z_k\rangle}^2=\frac{1}{d},
    \qquad \forall j,k\in\{0,1,\cdots,d-1\}.
\end{equation*}
In other words, if the system is prepared in a basis vector from one basis, then all outcomes of a measurement in a different basis occur with equal probability. A complete set of MUBs contains $d+1$ bases, and such complete sets are known to exist whenever $d$ is a prime power.

Here we introduce a class of bipartite Bell inequalities tailored to MUBs~\cite{Colomer2022threenumerical}. Bob receives an input $t\in\{0,1,\cdots,n-1\}$ and outputs $b\in\{0,1,\cdots,d-1\}$. Alice receives an input
\begin{equation*}
    x=(y,z,j,k),
\end{equation*}
where $0\le y<z\le n-1$ and $j,k\in\{0,1,\cdots,d-1\}$, and outputs
\begin{equation*}
    a\in\{1,2,\perp\},
\end{equation*}
where $\perp$ is some symbol.
The Bell expression can be written as
\begin{equation*}
\begin{aligned}
    \mathcal{S}_{d,n}
    :=&\sum_{0\le y<z\le n-1}\sum_{j,k=0}^{d-1}
    \Bigl[
        p(a=1,b=j|x=(y,z,j,k),t=y)
        -p(a=1,b=k|x=(y,z,j,k),t=z)\nonumber\\
        &-p(a=2,b=j|x=(y,z,j,k),t=y)
        +p(a=2,b=k|x=(y,z,j,k),t=z)\nonumber\\
        &-\gamma\, p_A(a=1|x=(y,z,j,k))
        -\gamma\, p_A(a=2|x=(y,z,j,k))
    \Bigr]
    ,
\end{aligned}
\end{equation*}
where
\begin{equation*}
    \gamma=\frac{1}{2}\sqrt{\frac{d-1}{d}},
\end{equation*}
and $p_A(a|x)$ denotes Alice's marginal probability.

The classical value is
\begin{equation*}
    \omega_c = \frac{n(n-1)}{2} \cdot 2(d-1) \left(1- \frac 1 2 \sqrt{\frac{d-1}{d}}\right).
\end{equation*}
And the quantum value is upper bounded by 
\begin{equation}
\label{eq:mub_qval}
    \omega_q \le \omega_{\mathrm{MUB}}
    \coloneqq \binom{n}{2}\sqrt{d(d-1)}
    = \tfrac12 W_{\mathrm{MUB}}(d,n).
\end{equation}
This bound is attainable in local dimension $d$
if and only if $n$ MUBs exist in that
dimension~\cite{Colomer2022threenumerical}.
The table below contains only parameter pairs for which the required number of MUBs is known to exist; for prime-power dimensions this follows by taking subsets of a complete set~\cite{WoottersFields1989}.

Reference~\cite{Colomer2022threenumerical} studies this
MUB-search problem using see-saw, nonlinear semidefinite
programming, and Monte Carlo methods.
For the strategies considered there, the reported
objective $W$ corresponds to the Bell value
$\mathcal S_{\mathrm{ref}}=W-\tfrac12W_{\mathrm{MUB}}$.
Hence
\begin{equation*}
    1-\frac{\mathcal S_{\mathrm{ref}}}{\omega_{\mathrm{MUB}}}
    =2\left(1-\frac{W}{W_{\mathrm{MUB}}}\right).
\end{equation*}
We therefore multiply the best reported relative gap
by two to obtain the Bell-value gap shown
in~\Cref{tab:eval-mub}.

\paragraph{Numerical results}
The base learning rate for MUB is $0.2$ by default. The calculations use $0.01$ for $(d,n)=(9,10),(11,12)$, $0.6$ for $(10,3)$, and $0.1$ for $(12,4)$. It is known that there are 4 MUBs in $d=12$, 3 MUBs in $d=6$, and 3 MUBs in $d=10$ via the standard tensor product construction~\cite{McNultyWeigert2026}.

\begin{table}[htbp!]
\centering
\caption{Fixed-state BGD values for the MUB inequalities. The BGD-gap column is the relative gap to the analytic optimum $\omega_{\mathrm{MUB}}=\omega_q=\tfrac12 W_{\mathrm{MUB}}$. The final column gives the best reported Bell-value relative gap from~\cite{Colomer2022threenumerical}. ``--'' denotes an unreported parameter pair. A gap of $0.000\%$ implies the quantum value is attained to the displayed precision.}
\label{tab:eval-mub}
\small
\setlength{\tabcolsep}{3.5pt}
\begin{tabular}{c c c c c c c}
\hline
$d$ & $n$ & $\omega_c$ & $\omega_{\mathrm{MUB}}$ & BGD & BGD gap & Ref. best gap \\
\hline
2  & 3  & 3.8787   & 4.2426   & 4.2426   & 0.000\% & 0.000\% \\
3  & 4  & 14.2020  & 14.6969  & 14.6969  & 0.000\% & 0.000\% \\
4  & 5  & 34.0192  & 34.6410  & 34.6410  & 0.000\% & 0.000\% \\
5  & 6  & 66.3344  & 67.0820  & 67.0820  & 0.000\% & 0.000\% \\
6  & 3  & 16.3069  & 16.4317  & 16.4317  & 0.000\% & 0.000\% \\
7  & 8  & 180.4622 & 181.4607 & 181.4546 & 0.003\% & 0.000\% \\
9  & 10 & 380.5887 & 381.8377 & 381.5623 & 0.072\% & --      \\
10 & 3  & 28.3856  & 28.4605  & 28.4602  & 0.001\% & 0.000\% \\
11 & 12 & 690.7147 & 692.2138 & 691.7924 & 0.061\% & --      \\
12 & 4  & 68.8098  & 68.9348 & 68.9288  & 0.009\% & --      \\
\hline
\end{tabular}
\end{table}

Table~\ref{tab:eval-mub} compares the fixed-state BGD values with the quantum value and also the best feasible value obtained in~\cite{Colomer2022threenumerical}.
The see-saw and nonlinear-SDP methods in~\cite{Colomer2022threenumerical} reach the quantum value to their displayed precision at the five parameter pairs $(d,n)=(2,3),(3,4),(4,5),(5,6),(6,3)$. The Monte Carlo method reaches the same precision at those pairs and at $(7,8)$ and $(10,3)$. BGD reaches the same displayed precision at the five pairs shared with the see-saw and nonlinear-SDP calculations. The gap is small but nonzero for $(7,8)$ and $(10,3)$. The largest gaps are $0.072\%$ at $(9,10)$, $0.061\%$ at $(11,12)$, and $0.009\%$ at $(12,4)$. These three parameter pairs are not reported in~\cite{Colomer2022threenumerical}. 
Thus BGD, a general-purpose optimizer, obtains results comparable to those of the specialized optimizers on the reported instances. Note that as a local first-order method, BGD is not designed for computing the quantum value to very high precision.

\section{Discussion}\label{sec:discussion}
In this paper, we presented an optimizer for computing lower bounds on the quantum value for a general nonlocal game based on batched gradient descent (BGD). Our BGD optimizer evaluates the expected utility by a direct tensor contraction, never forming the dense $d^{\Nparty}\times d^{\Nparty}$ Bell operator. This lets it handle problem sizes beyond the standard numerical methods such as see-saw. In our numerical studies it performs remarkably well: it recovers the quantum value of the solved families to high precision and  computes high-quality lower bounds on unsolved families.
From our numerical experiments, we find that across all families, BGD handles games well beyond the reach of a standard off-the-shelf implementation of the see-saw method because for the former, per-iteration cost is dominated by the contraction against the weighted utility tensor rather than the latter's repeated dense eigenproblems. It is general-purpose, able to optimize any game specified by a utility function and an input distribution, and its batched restarts naturally parallelize on a GPU to give a good estimate of the quantum value in mere seconds for instances where the see-saw method requires hundreds of seconds or more.

However, these gains do not remove the exponential dependence on the number of parties $\Nparty$. The weighted utility tensor $W$ has  $(\Nin\Nout)^{\Nparty}$ entries. Its size is therefore exponential in $\Nparty$, but, for fixed $\Nparty$, it is polynomial of degree $\Nparty$ in each of $\Nin$ and $\Nout$. Moreover, although BGD avoids forming the Bell operator, it represents the shared quantum state explicitly by its $d^{\Nparty}$ complex amplitudes, so storing the state and evaluating contractions involving it remain exponential in the number of parties. The isolated measurements in \Cref{fig:cost-scaling} of~\Cref{app:cost} exhibit these predicted dependencies once the contractions become compute-bound. Further gains may therefore come from exploiting sparsity, symmetry, or locality in the game specification, and from replacing the unrestricted state vector with one from a structured variational family. This is perfectly sensible for practical applications where the quantum state is prepared via a shallow quantum circuit and therefore can be compactly represented with a tensor network~\cite{Markov2008Simulating}. Similarly, the quantum measurements are also implemented by a shallow quantum circuit.
The tensor network describing the Bell expression
could then be efficiently contracted using techniques such as that of~\cite{boixo2017simulation,villalonga2019flexible,Huang2021Efficient}. The effectiveness of this approach could depend on the form of the utility function.

The explicit dependence on the weighted utility tensor can also be relaxed. The expected utility is the average of the utility $\mathcal V(\mathbf a|\mathbf s)$ over inputs drawn from the input distribution $\pi$ and outputs drawn from the behavior $p(\mathbf a|\mathbf s)$, so its gradient admits a score-function (REINFORCE) estimate from sampled measurement records $(\mathbf s,\mathbf a)$ and the gradients of their log-probabilities $\log p(\mathbf a|\mathbf s)$, which the state vector provides. Optimizing from such samples queries the utility only at the sampled input--output tuples rather than reading all $(\Nin\Nout)^{\Nparty}$ entries.

Our optimizer substantially expands the space of Bell inequalities that can be numerically analyzed. In particular, it can serve as a numerical evaluation tool for multi-agent coordination problems that can be modeled as nonlocal games~\cite{ding2026quantum}. Real-world coordination problems would involve utility functions and input distributions based on real data. Hence, a nonlocal game that models such problems may have a large number of inputs and outputs. Such games can in principle be efficiently handled by our BGD optimizer. Furthermore, in future work we can generalize our optimizer to handle more complicated scenarios involving real-time communication between parties with time-dependent input and output processes~\cite{ding2025quantum}. For evaluating physical implementations for violating Bell inequalities, future extensions of our optimizer could also take into account noise~\cite{CiscoQuantumSync2025} or photon loss~\cite{PhysRevA.47.R747,Xu2026Violating}.

\paragraph{Acknowledgments} We thank Mengyao Hu for helpful discussions. DD would like to thank God for all of His provisions.

\paragraph{AI Disclosure} Claude-family and GPT-family models were also used to assist with writing and proofreading the manuscript, as well as checking the code used for the numerical experiments. The proof of the two-input load balancing bound in this work was developed with the assistance of GPT-5.6 Sol.

\bibliographystyle{unsrt}
\bibliography{ref}

\addtocontents{toc}{\protect\setcounter{tocdepth}{1}}
\appendix

\section{See-Saw Implementation Details}
 \label[appendix]{app:seesaw-eng}
This appendix details the see-saw method introduced in \Cref{subsec:existing_quantum}, as well as the closed-form POVM update for binary outputs and the dense Hermitian matrix eigensolver used for the state update. The per-sweep computational complexity is also derived.

\paragraph{Modifications to the Ket.jl implementation.}
The implementation used in this work is based on an upstream version of
\texttt{Ket.jl}~\cite{KetJl} that already contains the multipartite see-saw procedure.\footnote{The
modified see-saw code used in the numerical experiments is available at \url{https://github.com/JubilantJerry/quantum-bell-inequality-optim-seesaw}. } We retain the alternating optimization over the
parties' measurements, but modify the binary-output bipartite state update to use the top
eigenvector of the full $d^2\times d^2$ Bell operator rather than a
fixed-basis Schmidt-state update. We also use the convergence condition
\begin{equation*}
    \left|\omega_t-\omega_{t-1}\right|\leq\varepsilon,
\end{equation*}
and allow up to $1000$ sweeps. These changes improve numerical reliability
without changing the underlying see-saw optimization problem.

\subsection{Description of algorithm}

\paragraph{Sweep structure.} Each sweep computes the coefficient matrices $\Gamma_{s_k}^{(k)}$ for one party, optimizes that party's POVMs while holding the rest of the strategy fixed, and repeats this measurement update for every party. Unless the shared state is fixed by the benchmark protocol,
the sweep then optimizes the shared quantum state $\ket{\psi}$. Repeating these sweeps produces a nondecreasing sequence of expected utilities when each subproblem is solved exactly. The final result generally depends on the initialization.

\paragraph{Coefficient matrices by partial trace.} Both the general POVM update and its binary-output specialization reduce the $\Nparty$-party Bell operator to an effective operator on the party being updated. With the state $\rho=\ketbra{\psi}$ and every other party's measurement held fixed, each Bell term contributes to party $k$ the $d\times d$ Hermitian operator $\operatorname{Tr}_{\neq k}\!\big[\rho\,\big(\bigotimes_{j\neq k}M_j\big)\otimes\id_k\big]$, obtained by tracing out all other parties after applying their fixed measurement operators. Summing these contributions with the Bell coefficients gives the matrices $\Gamma^{(k)}_{s_k,a_k}$ in~\Cref{eq:seesaw_povm_sdp}. For general outputs they define the POVM SDP
\begin{equation*}
    \max \sum_{s_k,a_k}\tr\!\big(\Gamma^{(k)}_{s_k,a_k}\, M_{a_k\vert s_k}\big)
    \quad\text{s.t.}\quad
    M_{a_k\vert s_k}\succeq 0,\ \ \sum_{a_k}M_{a_k\vert s_k}=\id_d.
\end{equation*}
For binary outputs, completeness constraints $\sum_{a_k}M_{a_k\vert s_k}=\id_d$ leave one independent measurement operator per input and hence one coefficient matrix $\Gamma^{(k)}_{s_k}$ per input.

\paragraph{Binary-output projection.} When every output is binary, each POVM is determined by one effect $0\preceq E_k(s_k)\preceq\id_d$, with the second element being $\id_d-E_k(s_k)$. The linear objective is maximized by choosing $E_k(s_k)$ to be the projector $P_k(s_k)$ onto the positive eigenspace of $\Gamma^{(k)}_{s_k} \coloneqq \Gamma^{(k)}_{s_k,0} - \Gamma^{(k)}_{s_k,1}$:
\begin{equation}
    P_k(s_k)  \coloneqq \!\!\sum_{i:\,\lambda_i>0}\!\ketbra{u_i},
    \qquad \Gamma^{(k)}_{s_k}=\sum_i \lambda_i\ketbra{u_i},
    \label{eq:seesaw_positive_part}
\end{equation}
Every update is then an eigendecomposition---of a $d\times d$ matrix for a projector, of a $d^{\Nparty}$ one for the state---rather than an SDP, which optimized eigensolvers run far faster in practice.

\paragraph{State eigenproblem.} After a sweep over all parties, maximizing $\bra{\psi}\mathcal B\ket{\psi}$ updates the state to a top eigenvector of the current Bell operator and returns $\lambda_{\max}(\mathcal B)$. This is an eigenproblem for a $D\times D$ Hermitian matrix with $D=d^{\Nparty}$. Our implementation in the Julia \texttt{Ket.jl} library~\cite{KetJl} uses LAPACK's ranged dense Hermitian eigensolver to compute only the largest eigenpair. This update globally solves the state subproblem and therefore cannot decrease the expected utility, up to numerical precision, even when the top eigenspace is degenerate.%

\subsection{Complexity analysis}
\label[appendix]{app:bigo-seesaw}
Assume a game shape with $\Nparty$ parties, $\Nin$ inputs, $\Nout$ outputs, and a common quantum dimension $d$. A see-saw sweep performs $\Nparty$ per-party POVM updates and one state update.

\paragraph{State update.} The state to be optimized is the top eigenpair of the assembled $d^{\Nparty} \times d^{\Nparty}$ Bell operator. First, assembling the Bell operator involves summing $(\Nin\Nout)^{\Nparty}$ terms, with each term being a $d^{\Nparty}\times d^{\Nparty}$ matrix. This requires $O\!\big((\Nin\Nout)^{\Nparty}d^{2\Nparty}\big)$ operations. Next, we perform a dense Hermitian decomposition of the Bell operator, incurring a cost of $O(d^{3\Nparty})$ operations.

\paragraph{Per-party update.} Party $k$'s coefficient matrices $\Gamma^{(k)}_{s_k,a_k}$ are assembled by contracting the weighted utility tensor and the quantum state against the other parties' fixed measurements. Building the operator involving the other parties' measurements involves $(\Nin\Nout)^{\Nparty-1}$ terms. Now, because the state is pure, $\rho=\ketbra{\psi}$, each term's product $\rho\hspace{0.5pt}O$ is formed as $\ket{\psi}\!\big(\bra{\psi}O\big)$---a rank-one update with only $O(d^{2\Nparty})$ operations rather than a dense $ d^{\Nparty}\times d^{\Nparty}$ matrix multiplication---before the partial trace that reduces it to $d\times d$, which requires $O(d^{\Nparty} d )=O(d^{\Nparty+1})$ operations. We repeat this for every input and output of party $k$. Hence assembling the coefficient matrices costs $O\!\big((\Nin\Nout)^{\Nparty}d^{2\Nparty}\big)$. The $\Nin\Nout$ resulting matrices are then solved by the semidefinite program~\Cref{eq:seesaw_povm_sdp} or by the positive-part projection described in~\Cref{app:seesaw-eng} for binary outputs.

\paragraph{Cost of the SDP solve.} The POVM SDP has $\Nin\Nout$ positive semidefinite blocks of size $d$. From the completeness constraints $\sum_{a_k}M_{a_k\vert s_k}=\id_d$, the SDP has exactly $d^2$ scalar equalities per input, $\Nin d^2$ in all. A primal--dual interior-point method takes Newton steps, each forming and factoring the Schur complement of these equalities. Constraints of different inputs act on disjoint blocks, so the Schur complement is block diagonal with $\Nin$ blocks of size $d^2$. Assembling these Schur complements over the $\Nout$ blocks costs $O(\Nin\Nout\,d^5)$, and factoring these blocks costs $O\!\big(\Nin(d^2)^3\big)=O(\Nin d^6)$. The PSD block factorizations cost $O(\Nin\Nout\,d^3)$. Thus the total cost is $O\!\big(\Nin d^{5}(\Nout+d)\big)$. The number of Newton steps depends on the barrier accuracy but not on $\Nparty$, so the per-party solve costs
\begin{equation*}
    C_{\mathrm{SDP}}=\tilde O\!\big(\Nin d^{5}(\Nout+d)\big),
\end{equation*}
where $\tilde O$ hides the number of Newton steps.
For binary outputs the SDP is replaced by an eigendecomposition of the $\Nin$ matrices $\Gamma^{(k)}_{s_k}$, which costs $O(\Nin d^3)$.

\paragraph{Per-sweep cost.} Summing the $\Nparty$ per-party assemblies and solves with the state assembly and update, we obtain the total cost
\begin{equation}
\begin{aligned}
    \text{general:}\quad & \tilde O\!\big(\Nparty(\Nin\Nout)^{\Nparty}d^{2\Nparty}+\Nparty\Nin d^{5}(\Nout+d)+d^{3\Nparty}\big),\label{eq:seesaw_complexity}\\
    \text{binary:}\quad & O\!\big(\Nparty(\Nin\Nout)^{\Nparty}d^{2\Nparty}+d^{3\Nparty}\big).
\end{aligned}
\end{equation}
The tilde in the general bound hides the accuracy-dependent number of interior-point Newton steps; the binary path contains no SDP solve and retains the ordinary $O$ bound.

\section{NPA Implementation Details}
\label[appendix]{app:npa-eng}
This appendix details the NPA relaxation in \Cref{eq:npa_sdp}, including the moment-matrix construction and augmented monomial set. We also analyze the moment-matrix size and give the complexity of the augmented level 1 relaxation.

\subsection{Description of algorithm}
Assume a game shape with $\Nparty$ parties, $\Nin$ inputs,
and $\Nout$ outputs. For each party~$i$, let $\mathcal O_i$
contain the $\Nin(\Nout-1)$ retained measurement projectors.
For each input, the omitted outcome projector is expressed
as the identity minus the sum of the retained projectors.

\paragraph{Moment matrices.}
An operator word is a product of measurement projectors;
its length is the number of factors.
Let $\mathcal S_k$ contain all words of length at most~$k$
formed from $\bigcup_i\mathcal O_i$, including the identity
as the empty word.

Given a quantum strategy and a chosen word set,
define the moment matrix by
\begin{equation*}
    \Gamma_{S,T}
    = \bra{\psi}S^\dagger T\ket{\psi}.
\end{equation*}
For any complex coefficients $\{c_S\}$,
\begin{equation*}
    \sum_{S,T}\overline{c_S}\,\Gamma_{S,T}\,c_T
    = \left\|\sum_S c_S S\ket{\psi}\right\|^2
    \geq 0.
\end{equation*}
Thus $\Gamma$ is positive semidefinite.
The normalization of the state also gives
$\Gamma_{\id,\id}=1$.
Further relations between the entries follow from the
projector identities and commutativity between different
parties, as specified below.
The standard NPA level~$k$ indexes this matrix by
$\mathcal S_k$; for multipartite objectives, we use the
augmented word set described next.

\paragraph{Augmented first-level monomials.}
To represent every term in a multipartite Bell objective
already at the first level, we augment the standard word set.
We fix the balanced partition
$P_1=\{1,\ldots,\lceil\Nparty/2\rceil\}$ and
$P_2=[\Nparty]\setminus P_1$.
Define the corresponding monomial sets
\begin{equation}
    \mathcal H(P_i)
    \coloneqq
    \left\{\,\prod_{j\in P_i}X_j : X_j\in\{\id\}\cup\mathcal O_j\,\right\},
    \qquad
    \mathcal S_k^{\mathrm{aug}}
    \coloneqq
    \mathcal S_k\cup\mathcal H(P_1)\cup\mathcal H(P_2).
    \label{eq:npa_aug_monomials}
\end{equation}
Expressing the last projector in terms of other projectors via completeness, every operator product that appears in the Bell operator contains either the identity or one projector for each party. It therefore factors as $u^\dagger v$, with $u\in\mathcal H(P_1)$ and $v\in\mathcal H(P_2)$, and corresponds to the $(u,v)$ element of the moment matrix $\Gamma$. In particular, $\deg(u)\le\lceil\Nparty/2\rceil$ and $\deg(v)\le\lfloor\Nparty/2\rfloor$; equality is not required because either word may contain identities. For $\Nparty=2$, both sets contain only single-party monomials already present in $\mathcal S_1$, so
$\mathcal S_1^{\mathrm{aug}}=\mathcal S_1$.

The augmentation depends only on the game shape and on the fixed balanced partition, not on the utility function. The choice of partition can affect the relaxation for multipartite games without full permutation symmetry, including the random CGLMP network and uniformly random games. Fixing the party labels makes this choice explicit and reproducible. It is immaterial for the symmetric multipartite families (MABK and Mermin) and for bipartite games. Finally, every word in $\mathcal H(P_1)\cup\mathcal H(P_2)$ has degree at most $\lceil\Nparty/2\rceil$. Hence, when $k\ge\lceil\Nparty/2\rceil$, these words already belong to the standard set $\mathcal S_k$, so $\mathcal S_k^{\mathrm{aug}}=\mathcal S_k$ and the augmentation no longer changes the relaxation.

\paragraph{Constraints.}
At level~$k$, the SDP variable is a matrix $\Gamma^{(k)}$
indexed by $\mathcal S_k^{\mathrm{aug}}$.
We require positive semidefiniteness, normalization,
and the linear relations induced by the measurement
operators. Specifically, the projectors satisfy
\begin{equation*}
\begin{aligned}
    \Pi_{i,a}(s)^\dagger
    &= \Pi_{i,a}(s), \\
    \Pi_{i,a}(s)\Pi_{i,b}(s)
    &= \delta_{a,b}\Pi_{i,a}(s), \\
    \Pi_{i,a}(s)\Pi_{j,b}(t)
    &= \Pi_{j,b}(t)\Pi_{i,a}(s),
    \qquad i\neq j.
\end{aligned}
\end{equation*}
The second identity expresses both idempotence and
orthogonality of distinct outcomes of the same input.
Completeness is incorporated by expressing each omitted
outcome projector as the identity minus the sum of the
retained projectors.

We implement these identities by reducing the products
$S^\dagger T$ that label the matrix entries.
Entries whose products reduce to the same word share
one scalar variable, and products annihilated by
orthogonality give zero entries.
For example, for a retained projector~$P$,
the identity $P^2=P$ gives
$\Gamma_{\id,P}=\Gamma_{P,P}$.

The implementation uses a real symmetric moment matrix.
Indeed, averaging a feasible matrix with its entrywise
complex conjugate preserves positive semidefiniteness,
the linear constraints, and the objective value of a
Bell functional with real coefficients.
Restricting to real matrices therefore does not change
the relaxation optimum.
Adjoint-related words can consequently share one real
scalar variable.

With these constraints, the expected utility can be
optimized by a semidefinite program.
After expanding the omitted outcomes, every term in the
expected utility is represented by an entry of
$\Gamma^{(k)}$, as shown by the factorization above.
Let $C_k$ be a real symmetric matrix encoding the
coefficients $\pi(\mathbf s)\mathcal V(\mathbf a\vert\mathbf s)$
in this representation.
The augmented level-$k$ relaxation is
\begin{equation}
\begin{aligned}
    I_k
    ={}& \max_{\Gamma^{(k)}}
         \tr\bigl(C_k\Gamma^{(k)}\bigr) \\
    \text{s.t.}\quad
    & \Gamma^{(k)} \succeq 0, \\
    & \Gamma^{(k)}_{\id,\id}=1, \\
    & \Gamma^{(k)}
      \text{ satisfies the linear relations described above}.
\end{aligned}
\label{eq:npa_sdp}
\end{equation}
Here we optimize over the matrix entries directly,
without requiring a state and measurement operators
that generate them.
Every quantum strategy gives a feasible matrix with the
same expected utility, whereas a feasible matrix need
not correspond to a quantum strategy.
Hence $I_k\geq\omega_q$.

The word sets are nested:
$\mathcal S_k^{\mathrm{aug}}
 \subseteq\mathcal S_{k+1}^{\mathrm{aug}}$.
Restricting a feasible level-$(k+1)$ matrix to the
level-$k$ index set preserves all level-$k$ constraints
and the objective value.
Therefore $I_{k+1}\leq I_k$.
Since $\mathcal S_k^{\mathrm{aug}}=\mathcal S_k$
for $k\geq\lceil\Nparty/2\rceil$, the augmented hierarchy
has the same limit as standard NPA.
This limit is the commuting-operator value, obtained
by allowing measurements of different parties to act
as commuting operators on a common Hilbert space.
It can exceed the finite-dimensional tensor-product
value~$\omega_q$~\cite{navascues2008convergent,ji2020mip}.

At low levels, the constraints in \Cref{eq:npa_sdp}
need not imply nonnegativity of every joint probability.
More precisely, the linear expressions that represent
$p(\mathbf a\vert\mathbf s)$ for genuine quantum strategies
may take negative values on a feasible relaxed matrix.
Requiring all these expressions to be nonnegative,
including those involving omitted outcomes, gives
additional linear inequalities and can strengthen the
bound~\cite{navascues2008convergent}.
Without these inequalities, a low-level bound may exceed
the algebraic maximum of the Bell expression.

\paragraph{Certification and extraction.}
At any finite NPA level, a feasible solution of the dual SDP provides an upper bound on the relaxation optimum, and therefore on \(\omega_q\); this statement is exact in exact arithmetic, while our reported numerical bounds hold to solver precision. In general, an optimal moment matrix of the relaxation need not arise from an actual quantum state and measurements. 
When an applicable rank-loop condition is satisfied,
one can reconstruct an explicit state and measurement
operators attaining the relaxation
value~\cite{navascues2008convergent}.

\subsection{Complexity analysis}
\label[appendix]{app:bigo-npa}
The standard monomial set $\mathcal S_k$ consists of products of at most $k$ measurement operators, of size $\abs{\mathcal S_k}=O([\Nparty\Nin(\Nout-1)]^k)$. A dense moment matrix requires $O(\bigl\lvert\mathcal S_k\bigr\rvert^2)$ storage, and a dense factorization within an interior-point iteration costs $O(\bigl\lvert\mathcal S_k\bigr\rvert^3)$. The complete SDP cost also depends on the number and sparsity of the affine constraints and on the number of iterations, so it is not determined by the moment-matrix dimension alone.

Our augmented first-level relaxation in~\Cref{eq:npa_aug_monomials} draws monomials from two disjoint party sets $P_1,P_2$ of sizes $\lceil\Nparty/2\rceil$ and $\lfloor\Nparty/2\rfloor$. Each party has $\kappa \coloneqq \Nin(\Nout-1)$ measurement operators, so the two monomial sets have $(1+\kappa)^{\lceil \Nparty/2\rceil}$ and $(1+\kappa)^{\lfloor \Nparty/2\rfloor}$ elements. Their only common element is the identity, so the moment-matrix dimension at this first relaxation is
\begin{equation*}
    d_\Gamma
    =(1+\kappa)^{\lceil\Nparty/2\rceil}
     +(1+\kappa)^{\lfloor\Nparty/2\rfloor}-1
    =O\!\big((1+\kappa)^{\lceil\Nparty/2\rceil}\big).
\end{equation*}
Compared with including all products containing at most one
retained projector per party, which gives
$(1+\kappa)^{\Nparty}$ indices, the balanced-partition
construction reduces the exponent to $\lceil\Nparty/2\rceil$.
The dense moment matrix therefore uses $O(d_\Gamma^2)$
storage, while each dense matrix factorization costs
\begin{equation*}
    O\!\big(d_\Gamma^{3}\big)=O\!\big((1+\Nin(\Nout-1))^{3\lceil \Nparty/2\rceil}\big).
\end{equation*}
\section{BGD Implementation Details}
\label[appendix]{app:bgd-eng}
This appendix details the components of the BGD optimizer: the batched restarts, the contraction order, the Adam-based preconditioner that sets the step scale, and the line search that adapts it across iterations.

Let $\Nparty$ be the number of parties, and assume the symmetric setting in which every party has
$\Nin$ inputs, $\Nout$ outputs, and local Hilbert-space dimension $d$.
The joint Hilbert-space dimension is
\begin{equation*}
D = d^{\Nparty}.
\end{equation*}
The complexity analysis treats $\Nout$ and $d$ as independent size parameters. In the benchmark experiments, we set $\Nout=d$ for every family except MUB, where the local dimension and Alice's output cardinality differ.

\subsection{Batched parallelism on the GPU}
\label[appendix]{app:batching}
BGD executes $q$ independent random restarts in parallel. Each restart has its own shared state and local measurement parameters. Since a single shared state has $d^{\Nparty}$ complex amplitudes, stacking the $q$ states gives a tensor of shape $q \times d^{\Nparty}$. Likewise, each local unitary is stored with an additional leading batch dimension of size $q$, so that the parameters of all $q$ strategies can be processed simultaneously. A single batched tensor contraction then evaluates the objective for all $q$ strategies and returns $q$ Bell values. On a GPU, the dominant operations are batched matrix multiplications, which can execute many restarts concurrently. As a result, increasing $q$ adds relatively little wall-clock time until the GPU is saturated; beyond that point, runtime grows roughly in proportion to the batch size. Thus, large numbers of random restarts can be used at low marginal cost on the GPU.
\Cref{fig:batch-scaling} reports the per-iteration wall-clock as a function of $q$ for a representative $(\Nparty,d,\Nin,\Nout)=(4,4,2,4)$ instance. On the A100, the time remains near $1\,\mathrm{ms}$ through $q\approx10^3$ and then grows approximately linearly as the device saturates. On one CPU core, it grows with $q$ from the start, as observed over the measured range through $q=64$.

\begin{figure}[htbp!]
\centering
\includegraphics[width=0.62\linewidth]{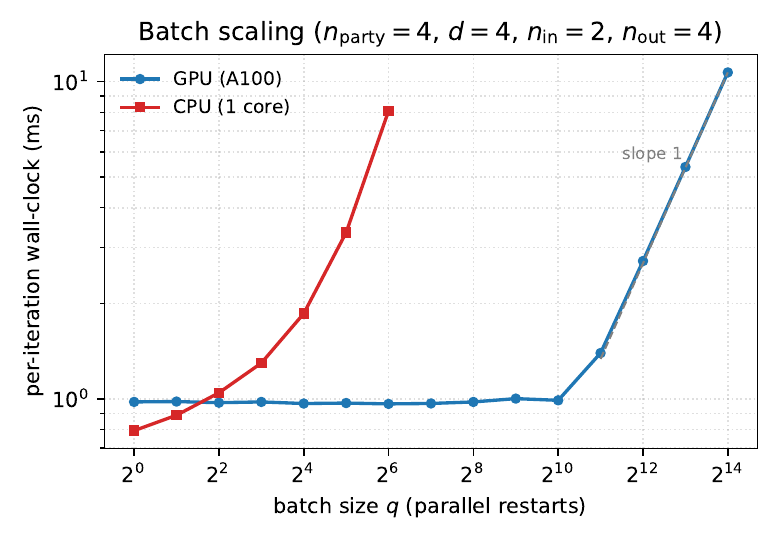}
\caption{Per-iteration wall-clock versus batch size $q$ for $(\Nparty,d,\Nin,\Nout)=(4,4,2,4)$ on log--log axes. A100 GPU and one-core CPU measurements are shown with a dashed slope-one guide.}
\label{fig:batch-scaling}
\end{figure}

\subsection{Contraction order and per-iteration cost}
\label[appendix]{app:cost}
Writing the expected utility through the Bell operator, $J=\bra{\psi}\mathcal B\ket{\psi}$ with
\begin{equation*}
\mathcal B
=
\sum_{\mathbf s,\mathbf a}
W_{\mathbf s, \mathbf a}
\bigotimes_{i=1}^{\Nparty}
\Pi^{(i)}_{a_i|s_i},
\end{equation*}
where $W_{\mathbf s,\mathbf a}=\pi(\mathbf s)V(\mathbf a|\mathbf s)$, implies a contraction order that first forms $\mathcal B$ by folding the local projectors into $W$ one party at a time and then evaluates the expectation in the state. Folding a party contracts its input-output index, of size $\Nin\Nout$, and introduces its pair of local operator indices, of size $d^2$. After $k$ folds the intermediate tensor has size
\begin{equation*}
(\Nin\Nout)^{\Nparty-k}d^{\,2k},
\end{equation*}
and folding the next party costs
\begin{equation*}
(\Nin\Nout)^{\Nparty-k}d^{\,2(k+1)}.
\end{equation*}
Summing over all $\Nparty$ folds gives
\begin{equation}
    C_{\mathcal B}
    =
    O\!\left(\sum_{j=1}^{\Nparty}
    (\Nin\Nout)^{\Nparty+1-j}d^{\,2j}\right)
    =
    O\!\left(
    \frac{\Nin\Nout\,d^2\bigl((\Nin\Nout)^{\Nparty}-d^{2\Nparty}\bigr)}
         {\Nin\Nout-d^2}
    \right),
    \label{eq:bell-operator-cost}
\end{equation}
The rational form has a removable singularity at $\Nin\Nout=d^2$, where its limit is $O\!\left(\Nparty(\Nin\Nout)^{\Nparty}d^2\right)$. The final expectation adds only $O(d^{2\Nparty})$.

BGD avoids constructing $\mathcal B$. Instead, it evaluates~\Cref{eq:bell-einsum} directly by contracting the
state tensor, the local measurement tensors, and the weighted-utility tensor $W$. For the
complexity estimate, we use a forced meet-in-the-middle contraction path that folds the local
measurement basis change for one party into the state at a time. After each fold, one local
state index of dimension $d$ is replaced by the corresponding local input-output index of size
$\Nin\Nout$.

For the direct projective path, suppose that $k$ parties have already been folded into the
state. The intermediate tensor then has size proportional to
\begin{equation*}
(\Nin\Nout)^k d^{\,\Nparty-k}.
\end{equation*}
Folding the next party contracts one $d$-dimensional state leg with a local measurement tensor
and costs
\begin{equation*}
(\Nin\Nout)^{k+1}d^{\,\Nparty-k}.
\end{equation*}
Summing over all $\Nparty$ folds gives
\begin{equation}
    C_{\mathrm{proj}}
    =
    O\!\left(\sum_{j=1}^{\Nparty}
    (\Nin\Nout)^j d^{\,\Nparty+1-j}\right)
    =
    O\!\left(
    \frac{\Nin\Nout\,d\bigl((\Nin\Nout)^{\Nparty}-d^{\Nparty}\bigr)}
         {\Nin\Nout-d}
    \right).
    \label{eq:bgd-cost-projective}
\end{equation}
Its removable singularity lies at $\Nin\Nout=d$, where the limit is
$O\!\left(\Nparty(\Nin\Nout)^{\Nparty}d\right)$. In deriving \Cref{eq:bgd-cost-projective}, $\Nout$ and $d$ are retained as independent parameters. Identifying the local basis outcomes directly with the reported outcomes then specializes the direct-read-out parameterization to $\Nout=d$; under this specialization, the singular case occurs only at $\Nin=1$.

The stochastic read-out of~\Cref{eq:stochastic} is used whenever an explicit post-processing matrix is present, and is necessary when $\Nout\neq d$. Each local
unitary is then a full $d\times d$ basis change, followed by a stochastic matrix
\begin{equation*}
S : [d] \to [\Nout]
\end{equation*}
that maps the $d$ basis outcomes to the $\Nout$ reported outputs. Along the
corresponding carry path, the factor $\Nin^{\Nparty}$ is retained throughout,
while the intermediate tensors successively trade factors of $\Nout$ for factors of
$d$. The resulting geometric series is
\begin{equation}
    C_{\mathrm{stoch}}
    =
    O\!\left(
    \Nin^{\Nparty}\sum_{k=0}^{\Nparty}
    \Nout^{\,\Nparty-k}d^{\,k+1}
    \right)
    =
    O\!\left(
    \frac{\Nin^{\Nparty}d
    \bigl(d^{\,\Nparty+1}-\Nout^{\,\Nparty+1}\bigr)}
         {d-\Nout}
    \right).
    \label{eq:bgd-cost-stochastic}
\end{equation}
Although $\Nout\neq d$ forces this read-out, an explicit $S$ may also be retained when $\Nout=d$. The corresponding limit of \Cref{eq:bgd-cost-stochastic} is therefore relevant and equals
$O\!\left((\Nparty+1)\Nin^{\Nparty}d^{\,\Nparty+1}\right)$.
The stochastic map changes the contraction structure but introduces no additional asymptotic
size parameter.

\begin{figure}[h!]
\centering
\includegraphics[width=0.85\linewidth]{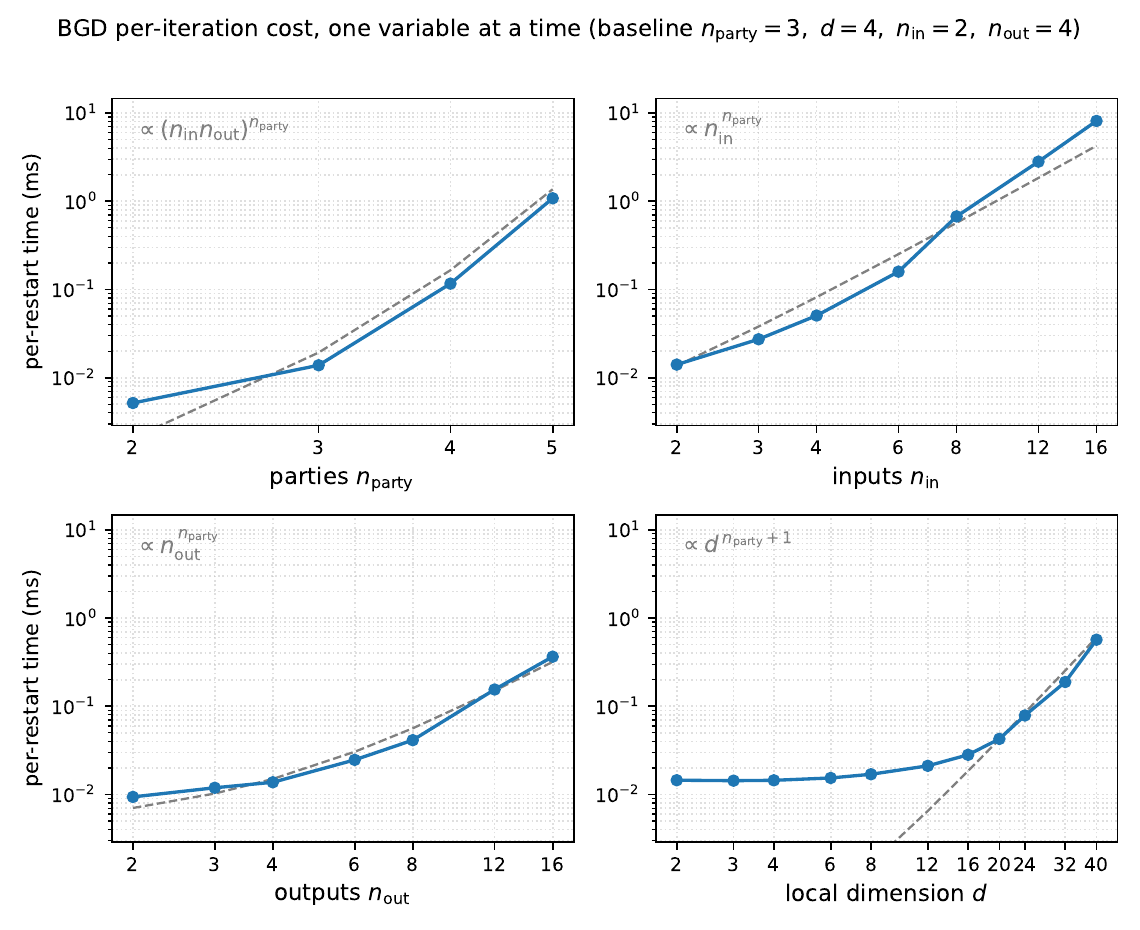}
\caption{Measured BGD per-iteration time when one size variable is varied about $(\Nparty,d,\Nin,\Nout)=(3,4,2,4)$, with both axes logarithmic. Blue curves show amortized per-restart measurements; gray dashed curves show the analytic trends described in the text.}
\label{fig:cost-scaling}
\end{figure}

\Cref{eq:bgd-cost-projective} gives the asymptotic cost of the direct-projective contraction order. For the stochastic read-out, \Cref{eq:bgd-cost-stochastic} is an upper bound obtained from the carry order; choosing a less expensive contraction order can only reduce this cost.

The Bell-operator cost~\Cref{eq:bell-operator-cost} has the same form as the direct-projective cost~\Cref{eq:bgd-cost-projective} with $d$ replaced by $d^2$: each party enters as a $d\times d$ projector rather than as a single $d$-dimensional state leg. BGD never materializes a $d^{\Nparty}\times d^{\Nparty}$ block and instead exploits the rank-one product structure of the projective-measurement terms. This distinction underlies the improved scaling with quantum dimension.

\Cref{fig:cost-scaling} compares these predictions with measured per-iteration times by varying one size variable at a time about the baseline $(\Nparty,d,\Nin,\Nout)=(3,4,2,4)$. 
We use random utility functions, input distributions, measurement bases, and shared quantum states. We then measure the average per-iteration time for our BGD optimizer.
The measurements use one pinned AMD EPYC 7543 core and a batch of $q=100$ restarts; the plotted time is the batch wall-clock divided by $q$. Both axes are logarithmic in every panel. For the four sweeps, the dashed curves evaluate the corresponding closed forms above at the baseline values of the other variables, each with its own fitted prefactor. For the dimension sweep the closed form is the stochastic upper bound~\Cref{eq:bgd-cost-stochastic}, with its prefactor fitted on the compute-bound tail $d\geq16$. Below $d\approx12$, the contraction is too small to dominate fixed dispatch overhead, and the measured time is nearly flat.

\FloatBarrier
\subsection{Preconditioning}
\label[appendix]{app:preconditioning}
BGD uses Adam with base learning rate $\eta$, supplementing its coordinate-wise preconditioning by a scalar factor for each restart. Let $v_{t,b,j}$ be the second-moment estimate for coordinate $j$ of restart $b$ at iteration $t$, and define
\begin{equation}
    c_{t,b}
    \coloneqq
    \left(
    \sum_j
    \left(\sqrt{v_{t,b,j}}+\epsilon\right)^2
    \right)^{1/2}.
    \label{eq:adam-preconditioner-norm}
\end{equation}
Let $\mathcal I_t$ denote the iterations since initialization or the most recent restoration of the optimizer state. The reference scale
\begin{equation*}
    c_{t,b}^{\max}
    =
    \max_{t'\in\mathcal I_t,\;t'\leq t}c_{t',b}
\end{equation*}
is reset whenever the parameters and Adam moments are restored. Given a threshold fraction $\kappa$, the scalar conditioning ratio is
\begin{equation}
    r_{t,b}
    =
    \frac{\min\!\left(\kappa c_{t,b}^{\max},\,c_{t,b}\right)}
         {\kappa c_{t,b}^{\max}}
    \in(0,1].
    \label{eq:effective-lr}
\end{equation}
Thus $r_{t,b}=1$ during the main descent whenever $c_{t,b}\geq\kappa c_{t,b}^{\max}$, while smaller second-moment scales give $r_{t,b}=c_{t,b}/(\kappa c_{t,b}^{\max})<1$ near convergence. With the step-size factor $\lambda_t$ of \Cref{app:linesearch}, the update applied to coordinate $j$ has scale
\begin{equation*}
    -\eta\lambda_t r_{t,b}
    \frac{\widehat m_{t,b,j}}{\sqrt{v_{t,b,j}}+\epsilon},
\end{equation*}
where $\widehat m_{t,b,j}$ is Adam's bias-corrected first-moment estimate. The restart index is retained throughout this construction, so different members of a batch can enter the damped regime at different times. When stochastic post-processing matrices are present, their parameters are scaled by the corresponding restart's $r_{t,b}$. Their gradients are first projected onto the simplex tangent cone, and every row is projected back onto the probability simplex after the update.

\subsection{Step-size calibration}
\label[appendix]{app:linesearch}
In addition to the conditioning above, BGD uses a multiplicative factor $\lambda_t$, initialized at $\lambda_0=1$ and decreased monotonically. The initial trainable parameters and Adam moments are retained before optimization begins; no later iterate replaces this reference state.

This mechanism calibrates the step size from the initial batch of strategies. If 12 cumulative improvements occur while the initial reference state is retained, then
\begin{equation*}
    \lambda_t\leftarrow \tfrac12\lambda_t,
\end{equation*}
the parameters and Adam moments are restored to their initial values, and the reference state is discarded. Each restart then resumes from its own initial strategy at the reduced scale. Because the improvement rule requires the retained reference state, this halving can occur at most once. If instead a window of 12 iterations contains no improvement, then
\begin{equation*}
    \lambda_t\leftarrow \tfrac23\lambda_t.
\end{equation*}
While the initial reference state is retained, this event also restores the initial parameters and Adam moments; after that state has been discarded, subsequent no-improvement windows reduce $\lambda_t$ without restoring the optimizer state. The restart-specific reference scales in \Cref{eq:effective-lr} are reset whenever such a restoration occurs.

Restoring the Adam moments together with the parameters prevents information accumulated along a rejected trajectory from influencing the renewed descent. BGD also retains the best feasible strategy encountered over all iterations and all batch elements, independently of the current iterate used by the calibration procedure.

\section{Representations of Bell Inequalities}
\label[appendix]{app:expressions}
The evaluated families are written in several standard forms: full joint probabilities, marginal probabilities, and correlators. This appendix records how these forms relate to the expected-utility convention of \Cref{section:nolocalgame}.

For input and output tuples $\mathbf s\in\prod_i S_i$ and $\mathbf a\in\prod_i A_i$, respectively, that convention is
\begin{equation*}
    \bar{\mathcal V}
    =\sum_{\mathbf s}\sum_{\mathbf a}
    \pi(\mathbf s)\,\mathcal V(\mathbf a\vert\mathbf s)\,
    p(\mathbf a\vert\mathbf s)
    \leq \omega_c.
\end{equation*}
Thus the Bell coefficients are the entries of the weighted utility tensor $\pi(\mathbf s)\mathcal V(\mathbf a\vert\mathbf s)$.
As a linear functional of the behavior, a Bell inequality defines a hyperplane separating the local set from behaviors that violate its classical bound.

Two properties allow conversion between the forms below. First, every behavior is normalized:
\begin{equation*}
    \sum_{\mathbf a\in\prod_i A_i}p(\mathbf a\vert\mathbf s)=1
    \qquad
    \forall\,\mathbf s\in\prod_i S_i.
\end{equation*}
Second, a quantum behavior is no-signaling. For every party $i$, fixed $\mathbf a_{-i}$ and $\mathbf s_{-i}$, and inputs $s_i,s_i'\in S_i$,
\begin{equation}
    \sum_{a_i\in A_i}p(a_i,\mathbf a_{-i}\vert s_i,\mathbf s_{-i})
    =
    \sum_{a_i\in A_i}p(a_i,\mathbf a_{-i}\vert s_i',\mathbf s_{-i}).
    \label{eq:nosignaling_constraints}
\end{equation}
Consequently, for every nonempty proper subset $I\subsetneq[\Nparty]$, the marginal behavior is well defined by
\begin{equation}
    p(\mathbf a_I\vert\mathbf s_I)
    =
    \sum_{\mathbf a_{I^c}\in\prod_{i\in I^c}A_i}
    p(\mathbf a_I,\mathbf a_{I^c}\vert
      \mathbf s_I,\mathbf s_{I^c}),
    \label{eq:marginal}
\end{equation}
and the right-hand side is independent of the choice of $\mathbf s_{I^c}$.

\paragraph{Full joint-probability form.}
Every Bell functional can be written as
\begin{equation*}
    \mathcal B[p]
    =\sum_{\mathbf s}\sum_{\mathbf a}
    \beta_{\mathbf s,\mathbf a}\,
    p(\mathbf a\vert\mathbf s)
    \leq\omega_c,
    \qquad \beta_{\mathbf s,\mathbf a}\in\R.
\end{equation*}
This is exactly the expected-utility form after identifying $\beta_{\mathbf s,\mathbf a}=\pi(\mathbf s)\mathcal V(\mathbf a\vert\mathbf s)$.

\paragraph{Marginal-probability form.}
A Bell functional may also contain terms $p(\mathbf a_I\vert\mathbf s_I)$ involving only a subset of the parties. Substituting \Cref{eq:marginal} converts each such term into full joint probabilities. No-signaling guarantees that the result does not depend on the unspecified inputs of the complementary parties.

\paragraph{Correlator form.}
When every party has $\Nout$ outputs, let $\zeta=e^{2\pi\mathrm{i}/\Nout}$. For two parties, the correlators are
\begin{equation*}
    \begin{aligned}
        \langle A_x^k\rangle
        &\coloneqq\sum_{a=0}^{\Nout-1}\zeta^{ak}p_A(a\vert x),\\
        \langle B_y^l\rangle
        &\coloneqq\sum_{b=0}^{\Nout-1}\zeta^{bl}p_B(b\vert y),\\
        \langle A_x^kB_y^l\rangle
        &\coloneqq\sum_{a,b=0}^{\Nout-1}\zeta^{ak+bl}p(a,b\vert x,y).
    \end{aligned}
\end{equation*}
For a quantum strategy, these correlators are expectations of powers of operators whose eigenvalues are the $\Nout$-th roots of unity.
The multipartite definition is analogous. For orders $k_i\in\{0,\ldots,\Nout-1\}$,
\begin{equation*}
    \big\langle A^{(1),k_1}_{x_1}\cdots
    A^{(\Nparty),k_{\Nparty}}_{x_{\Nparty}}\big\rangle
    =\sum_{a_1,\ldots,a_{\Nparty}=0}^{\Nout-1}
    \zeta^{\sum_i k_i a_i}
    p(a_1,\ldots,a_{\Nparty}\vert x_1,\ldots,x_{\Nparty}),
\end{equation*}
while correlators over fewer parties use the corresponding marginal behavior.

For binary outputs, the CHSH inequality is
\begin{equation}
    \langle A_0B_0\rangle+\langle A_0B_1\rangle
    +\langle A_1B_0\rangle-\langle A_1B_1\rangle\leq2.
    \label{eq:chsh_corr}
\end{equation}
Since
\begin{equation*}
    \langle A_xB_y\rangle
    =2p(a=b\vert x,y)-1
    =1-2p(a\neq b\vert x,y),
\end{equation*}
\Cref{eq:chsh_corr} is equivalent to the CHSH-game form
\begin{equation*}
    \frac14\bigl[
    p(a=b\vert0,0)+p(a=b\vert 0,1)+p(a=b\vert 1,0)
    +p(a\neq b\vert 1,1)\bigr]
    \leq\frac34.
\end{equation*}

\section{Proof of Two-Input Load Balancing Result}
\label[appendix]{app:multioutput-chsh}

This appendix proves \Cref{thm:multi-output-chsh}.
Let \(m:=\Nout\in\N\), with \(m\ge2\), denote the common output-alphabet size.
Alice and Bob receive uniformly random inputs \(x,y\in\{0,1\}\) and return outputs
\begin{equation*}
    a,b\in\{0,1,\ldots,m-1\}.
\end{equation*}
They win if \(a=b\) for the three input pairs \((x,y)\neq(1,1)\), and if \(a\neq b\) for \((x,y)=(1,1)\).  This is the equality--disequality version of CHSH, with an arbitrary output alphabet.

For any behavior \(p(a,b\vert x,y)\), set
\begin{equation}
    p^{=}_{xy}:=\Pr[a=b\vert x,y]
    =\sum_{r=0}^{m-1}p(r,r\vert x,y),
    \qquad
    Q_m:=p^{=}_{00}+p^{=}_{01}+p^{=}_{10}-p^{=}_{11}.
    \label{eq:lbchsh-Q-def}
\end{equation}
The winning probability is
\begin{equation}
    \omega=\frac14\bigl(p^{=}_{00}+p^{=}_{01}+p^{=}_{10}+1-p^{=}_{11}\bigr)
    =\frac{1+Q_m}{4}.
    \label{eq:lbchsh-winning}
\end{equation}
We first prove the classical part of \Cref{thm:multi-output-chsh}. By \Cref{eq:lbchsh-winning}, its quantum part is equivalent to establishing \(Q_m\le1+\sqrt2\) for all finite-dimensional tensor-product strategies and to exhibiting a strategy attaining equality. The binary CHSH strategy provides the lower bound. The remaining task is to exclude any improvement from additional output labels. We do this through an operator inequality for the four projectors associated with one label. Its right-hand side is linear in the local projectors, so summing over labels removes the dependence on \(m\) by measurement completeness.

\subsection{Classical value}

A deterministic classical strategy fixes four labels \(a_0,a_1,b_0,b_1\).  If it wins the first three input pairs, then
\begin{equation*}
    a_0=b_0,\qquad a_0=b_1,\qquad a_1=b_0.
\end{equation*}
By transitivity, \(a_1=b_1\), which loses on input \((1,1)\).  Thus no deterministic strategy wins all four inputs.  Since classical strategies are convex combinations of deterministic ones, \(\omega_c\le3/4\).  The constant strategy \(a_x=b_y=0\) wins the first three inputs and loses only \((1,1)\), so \(\omega_c=3/4\).

\subsection{Quantum model and projective reduction}

Alice and Bob share a state \(\rho\) on \(\mathcal H_A\otimes\mathcal H_B\).  For each input \(x\), Alice has a POVM \(\{E_x^r\}_{r=0}^{m-1}\), and for each input \(y\), Bob has a POVM \(\{F_y^r\}_{r=0}^{m-1}\).  The probabilities are
\begin{equation*}
    p(r,s\vert x,y)=\operatorname{Tr}\bigl[\rho(E_x^r\otimes F_y^s)\bigr].
\end{equation*}
Since only finitely many POVMs occur, a common local Naimark dilation lets us assume, without changing any value, that all these measurements are projective.  For one party, dilate each POVM input \(x\) on \(\mathcal H\) to a PVM on \(\mathcal H\oplus\mathcal K_x\).  The direct sum
\begin{equation*}
    \widehat{\mathcal H}:=\mathcal H\oplus\mathcal K_0\oplus\mathcal K_1
\end{equation*}
contains a common copy of \(\mathcal H\) and remains finite-dimensional.  Extend the PVM for input \(0\) arbitrarily on the unused \(\mathcal K_1\) summand, and extend the PVM for input \(1\) arbitrarily on \(\mathcal K_0\).  The resulting PVMs act on the same \(\widehat{\mathcal H}\), and their compressions to \(\mathcal H\) recover the original POVMs.  Applying this construction independently to Alice and Bob, and embedding \(\rho\) by the two local isometries, preserves every joint probability.  It is therefore enough to prove the upper bound for finite-dimensional projective measurements.

Write Alice's projectors as \(\{A_x^r\}_{r=0}^{m-1}\) and Bob's as \(\{B_y^r\}_{r=0}^{m-1}\), with
\begin{equation*}
    A_x^rA_x^s=\delta_{rs}A_x^r,\qquad
    \sum_{r=0}^{m-1}A_x^r=I_A,
    \qquad
    B_y^rB_y^s=\delta_{rs}B_y^r,\qquad
    \sum_{r=0}^{m-1}B_y^r=I_B.
\end{equation*}
Tensor products are suppressed below, so \(A_x^rB_y^s\) denotes \(A_x^r\otimes B_y^s\).  The Bell operator for \(Q_m\) is
\begin{equation}
    \mathcal B_m
    :=\sum_{r=0}^{m-1}
    \left(
        A_0^rB_0^r+A_0^rB_1^r+A_1^rB_0^r-A_1^rB_1^r
    \right),
    \label{eq:lbchsh-bell-operator}
\end{equation}
and \(Q_m=\operatorname{Tr}(\rho\mathcal B_m)\).

\subsection{A four-projection certificate}

The upper bound is obtained by applying the following operator inequality to each output label.

\begin{lem}[Four-projection certificate]
\label[lem]{lem:lbchsh-four-projection}
Let \(P,Q\) be projections on Alice's space and let \(R,S\) be projections on Bob's space.  In the tensor-product representation,
\begin{equation}
    PR+PS+QR-QS
    \preceq
    \frac{2+\sqrt2}{4}(P+R)+\frac{\sqrt2}{4}(Q+S).
    \label{eq:lbchsh-four-projection}
\end{equation}
\end{lem}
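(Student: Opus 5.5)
The plan is to reduce \Cref{lem:lbchsh-four-projection} to a finite family of small matrix inequalities using Jordan's lemma (the two-projection structure theorem), and then to verify those directly. For the pair \(P,Q\) on Alice's finite-dimensional space, Jordan's lemma gives an orthogonal decomposition \(\mathcal H_A=\bigoplus_j V_j\) into subspaces of dimension \(1\) or \(2\), each invariant under both \(P\) and \(Q\). On a two-dimensional block both restrictions are rank-one projections, and after a unitary change of basis
\begin{equation*}
    P\vert_{V_j}=\ketbra{0},
    \qquad
    Q\vert_{V_j}=\ketbra{\theta_j},
    \qquad
    \ket{\theta}=\cos\theta\ket0+\sin\theta\ket1 .
\end{equation*}
On a one-dimensional block, \(P\) and \(Q\) are scalars in \(\{0,1\}\). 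The same decomposition \(\mathcal H_B=\bigoplus_k W_k\) holds for \(R,S\), with angles \(\varphi_k\). Both sides of \Cref{eq:lbchsh-four-projection} are block diagonal with respect to \(\bigoplus_{j,k}V_j\otimes W_k\). It therefore suffices to prove the inequality on each block \(V_j\otimes W_k\), which has dimension \(1\), \(2\) or \(4\).

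Next I would dispose of the degenerate blocks. When all four operators are scalars in \(\{0,1\}\), the inequality is a check of \(16\) cases. For example, \(P=Q=R=1,\ S=0\) gives \(2\le \frac{2+\sqrt2}{2}+\frac{\sqrt2}{4}\). The mixed blocks, where one side is a scalar and the other is \(\ketbra0,\ketbra\theta\), reduce to \(2\times2\) inequalities. In each such case the left-hand side is a combination \(\alpha R+\beta S\) (or \(\alpha P+\beta Q\)) with \(\alpha,\beta\in\{0,\pm1,2\}\), compared against a similar combination plus a constant. These are settled by computing the eigenvalues of a sum of two rank-one projections at angle \(\varphi\), namely \(1\pm\cos\varphi\), and checking the resulting one-variable inequality.

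The main case is the \(4\times4\) block with angles \(\theta,\varphi\). There I would set
\begin{equation*}
    T(\theta,\varphi)=\text{RHS}-\text{LHS}
\end{equation*}
and show \(T\succeq0\). My first attempt is an explicit sum-of-squares certificate, valid for arbitrary projections. I would look for
\begin{equation*}
    T=\sum_i X_i^\dagger X_i,
\end{equation*}
where each \(X_i\) is a linear combination of \(\id,P,Q,R,S\) and the products \(PR,PS,QR,QS\), using \(P^2=P\) and similar identities. The coefficients would be found by solving the dual of the level-\(1{+}AB\) NPA SDP for this Bell operator and rationalizing; the \(\sqrt2\) entries suggest terms such as \(\bigl(P-\tfrac{1}{\sqrt2}(R+S)\bigr)\)-type combinations mirroring the CHSH SOS. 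If such a certificate is found, it bypasses Jordan's lemma entirely.

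If the certificate proves elusive, I would fall back on a direct eigenvalue check of the explicit \(4\times4\) matrix \(T(\theta,\varphi)\). The route is to show that its leading principal minors, or its trace and determinant together with the characteristic-polynomial coefficients, are nonnegative for all \(\theta,\varphi\). Symmetry \(\theta\leftrightarrow\varphi\) (swapping \(P\leftrightarrow R\) and \(Q\leftrightarrow S\) leaves the inequality invariant) and the expected equality at the CHSH optimum (\(\theta=\varphi=\pi/4\) up to the relevant sign conventions) should guide the algebra. This positivity verification is the step I expect to be the main obstacle, because the minimum eigenvalue touches zero and so the inequality is tight.
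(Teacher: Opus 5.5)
Your skeleton is the same as the paper's: Jordan's lemma for $(P,Q)$ and $(R,S)$, block-diagonality over $V_j\otimes W_k$, and an explicit $4\times4$ matrix in two parameters. The paper works with $X=2P-I$, $Y=2Q-I$, $U=2R-I$, $V=2S-I$; in your angles its parameters are $s=\cos2\theta$ and $t=\cos2\varphi$, so your CHSH point $\theta=\varphi=\pi/4$ is its $s=t=0$. Your separate treatment of degenerate blocks is unnecessary. Every block $V_j\otimes W_k$ of dimension $1$ or $2$ is a principal submatrix of the $4\times4$ matrix evaluated at $s\in\{\pm1\}$ or $t\in\{\pm1\}$. (Also, $1\pm\cos\varphi$ is the spectrum of $R+S$ only. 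The mixed blocks involve $\alpha R+\beta S$ with $\alpha\neq\beta$, although trace and determinant still settle them.)

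The genuine gap is that the proposal stops where the proof's content begins: you never show $T(\theta,\varphi)\succeq0$, and the first test you name is not sufficient. Nonnegative leading principal minors do not imply positive semidefiniteness (consider $\mathrm{diag}(0,-1)$). This matters here because $\det T$ vanishes at $s=t=0$ and at $s=t=1$. The characteristic-polynomial test is valid, but it still requires proving $\det T\ge0$ on the whole square, which is the hard part. The paper supplies what is needed in three steps:
\begin{itemize}
\item The first three leading minors are strictly positive on the closed square. After substituting $r_s^2=1-s^2$ and $r_t^2=1-t^2$ they are polynomials whose tensor-product Bernstein coefficients in $x=(s+1)/2$, $z=(t+1)/2$ are all positive.
\item The determinant is nonnegative. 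In $u=s+t$, $v=st$ it is a concave quadratic in $v$, so it suffices to check the endpoints $v=u^2/4$ and $v=\abs{u}-1$ of an interval containing all feasible $v$, where it factors with evident sign.
\item The no-pivot $LDL^T$ factorization then gives $T\succeq0$.
\end{itemize}

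Your SOS alternative is also only a hope: no certificate is exhibited, and the natural CHSH-type one fails. The expression
\[
\frac1{\sqrt2}\Bigl[\Bigl(X-\tfrac{U+V}{\sqrt2}+a\Bigr)^2+\Bigl(Y-\tfrac{U-V}{\sqrt2}+b\Bigr)^2\Bigr]
\]
reproduces every non-constant term of $4(\mathrm{RHS}-\mathrm{LHS})$ only for $a=-\frac{\sqrt2-1}{2}$ and $b=\frac12$. For these values it equals $4(\mathrm{RHS}-\mathrm{LHS})+\frac{a^2+b^2}{\sqrt2}I$, so it proves only a weaker bound. This reflects the fact that the slack also vanishes where $P=Q=R=S=0$. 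Every square in a valid certificate must therefore vanish there, which these do not. A certificate of a genuinely different shape would still have to be found, and its existence at level $1{+}AB$ is not guaranteed.
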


Assuming the lemma, the upper bound follows immediately.  For each output \(r\), substitute
\begin{equation*}
    P=A_0^r,\qquad Q=A_1^r,\qquad R=B_0^r,\qquad S=B_1^r
\end{equation*}
in \Cref{lem:lbchsh-four-projection}.  Summing over \(r\) and using PVM completeness gives
\begin{equation}
\begin{aligned}
    \mathcal B_m
    &\preceq
    \frac{2+\sqrt2}{4}\sum_{r=0}^{m-1}(A_0^r+B_0^r)
    +\frac{\sqrt2}{4}\sum_{r=0}^{m-1}(A_1^r+B_1^r) \\
    &=
    \left(2\cdot\frac{2+\sqrt2}{4}+2\cdot\frac{\sqrt2}{4}\right)I
    =(1+\sqrt2)I .
\end{aligned}
    \label{eq:lbchsh-operator-upper}
\end{equation}
Thus \(Q_m\le1+\sqrt2\) for every finite-dimensional tensor-product quantum strategy.

\subsection{Proof of the certificate}

It remains to prove \Cref{lem:lbchsh-four-projection}.  Define self-adjoint unitaries
\begin{equation*}
    X:=2P-I_A,\qquad
    Y:=2Q-I_A,\qquad
    U:=2R-I_B,\qquad
    V:=2S-I_B.
\end{equation*}
A direct expansion gives
\begin{equation}
    PR+PS+QR-QS
    =\frac12I+\frac12(X+U)+\frac14(XU+XV+YU-YV),
    \label{eq:lbchsh-left-expanded}
\end{equation}
while
\begin{equation}
    \frac{2+\sqrt2}{4}(P+R)+\frac{\sqrt2}{4}(Q+S)
    =
    \frac{1+\sqrt2}{2}I
    +\frac{2+\sqrt2}{8}(X+U)
    +\frac{\sqrt2}{8}(Y+V).
    \label{eq:lbchsh-right-expanded}
\end{equation}
Let
\begin{equation}
    H:=XU+XV+YU-YV+\frac{2-\sqrt2}{2}(X+U)-\frac{\sqrt2}{2}(Y+V).
    \label{eq:lbchsh-H-def}
\end{equation}
Subtracting \Cref{eq:lbchsh-left-expanded} from \Cref{eq:lbchsh-right-expanded} shows that the slack in \Cref{eq:lbchsh-four-projection} equals
\begin{equation*}
    \frac14(2\sqrt2 I-H).
\end{equation*}
Hence it suffices to prove
\begin{equation}
    H\preceq2\sqrt2 I.
    \label{eq:lbchsh-H-bound}
\end{equation}

The pairs \(X,Y\) and \(U,V\) are pairs of self-adjoint unitaries.  In finite dimensions, Jordan's lemma decomposes each local space into common invariant subspaces of dimension at most two.  On a two-dimensional Alice block, after choosing phases, we may write
\begin{equation}
    X=
    \begin{pmatrix}
    1&0\\
    0&-1
    \end{pmatrix},
    \qquad
    Y=
    \begin{pmatrix}
    s&\sqrt{1-s^2}\\
    \sqrt{1-s^2}&-s
    \end{pmatrix},
    \qquad
    s\in[-1,1],
    \label{eq:lbchsh-XY-block}
\end{equation}
and similarly on a two-dimensional Bob block,
\begin{equation}
    U=
    \begin{pmatrix}
    1&0\\
    0&-1
    \end{pmatrix},
    \qquad
    V=
    \begin{pmatrix}
    t&\sqrt{1-t^2}\\
    \sqrt{1-t^2}&-t
    \end{pmatrix},
    \qquad
    t\in[-1,1].
    \label{eq:lbchsh-UV-block}
\end{equation}
Each one-dimensional Alice block is a scalar sign pair obtained as a principal subblock of \Cref{eq:lbchsh-XY-block} at \(s=\pm1\).  Likewise, every one-dimensional Bob block is a principal subblock of \Cref{eq:lbchsh-UV-block} at \(t=\pm1\).  Thus the local block types are \(2\otimes2\), \(1\otimes2\), \(2\otimes1\), and \(1\otimes1\).  The corresponding joint operator matrices have sizes \(4\times4\), \(2\times2\), \(2\times2\), and \(1\times1\), respectively.  The lower-dimensional cases are principal submatrices of the \(4\times4\) matrices at endpoint values of \(s\) or \(t\).  Hence, it suffices to prove positivity of the \(4\times4\) matrices for all \(s,t\in[-1,1]\).

Put \(\eta:=\sqrt2\), \(r_s:=\sqrt{1-s^2}\), and \(r_t:=\sqrt{1-t^2}\).  On a two-dimensional Alice block and a two-dimensional Bob block, define
\begin{equation*}
    M(s,t):=2\eta I-H(s,t).
\end{equation*}
In the basis \(|00\rangle,|01\rangle,|10\rangle,|11\rangle\), the non-redundant entries of the real symmetric matrix \(M(s,t)\) are
\begin{equation}
\begin{aligned}
M_{11}&=st+\left(-1+\frac{\eta}{2}\right)(s+t)-3+3\eta,\label{eq:lbchsh-M11}\\
M_{12}&=\frac{r_t(2s-2+\eta)}{2},
&
M_{13}&=\frac{r_s(2t-2+\eta)}{2},
&
M_{14}&=r_sr_t,\\
M_{22}&=-st+s+t+\frac{\eta}{2}(s-t)+1+2\eta,\\
M_{23}&=r_sr_t,
&
M_{24}&=\frac{r_s(-2t+\eta+2)}{2},\\
M_{33}&=-st+s+t+\frac{\eta}{2}(-s+t)+1+2\eta,\\
M_{34}&=\frac{r_t(-2s+\eta+2)}{2},\\
M_{44}&=st-s-t-\frac{\eta}{2}(s+t)+1+\eta.
\end{aligned}
\end{equation}
Let \(\Delta_j\) denote the leading principal minor of order \(j\).  After using \(r_s^2=1-s^2\) and \(r_t^2=1-t^2\), the first three are
\begin{equation*}
    \Delta_1
    =st+\left(-1+\frac{\eta}{2}\right)(s+t)-3+3\eta,
\end{equation*}
\begin{equation*}
    \Delta_2
    =\frac12\left(
    4s^2t-3s^2+4st-2\eta s+6s-10t+6\eta t-4\eta+15
    \right),
\end{equation*}
\begin{align}
    \Delta_3=\frac12\Bigl(&
    8s^2t^2-4\eta s^2t+10s^2t-12s^2+3\eta s^2
    -4\eta st^2+10st^2 \nonumber\\
    &-16st+22\eta st+14s-12t^2+3\eta t^2+14t+16+12\eta
    \Bigr).
\end{align}
To check their signs on the square, set \(x=(s+1)/2\) and \(z=(t+1)/2\).  For \(w\in[0,1]\), let \(B_i^n(w):=\binom ni w^i(1-w)^{n-i}\).  These Bernstein polynomials are nonnegative and form a partition of unity.  For a matrix \(C=(c_{ij})_{0\le i\le d,\,0\le j\le e}\), we say that \(C\) is the tensor-product Bernstein coefficient matrix of a polynomial \(p\) when
\[
    p(s,t)=\sum_{i=0}^{d}\sum_{j=0}^{e}
    c_{ij}B_i^d\!\left(\frac{s+1}{2}\right)
    B_j^e\!\left(\frac{t+1}{2}\right).
\]
With this convention, the Bernstein coefficient matrices of \(\Delta_1\), \(\Delta_2\), and \(\Delta_3\) are respectively
\begin{equation}
\begin{pmatrix}
2\eta & -4+3\eta\\
-4+3\eta & 4(\eta-1)
\end{pmatrix},
\label{eq:lbchsh-bernstein-Delta1}
\end{equation}
\begin{equation}
\begin{pmatrix}
4(2-\eta) & 2(\eta-1)\\
16-5\eta & 2+\eta\\
2(5-3\eta) & 8
\end{pmatrix},
\label{eq:lbchsh-bernstein-Delta2}
\end{equation}
and
\begin{equation}
\begin{pmatrix}
8(-4+3\eta) & 2(1+2\eta) & 2(4-\eta)\\
2(1+2\eta) & 3(\eta+8) & 2(3+4\eta)\\
2(4-\eta) & 2(3+4\eta) & 16(1+\eta)
\end{pmatrix}.
\label{eq:lbchsh-bernstein-Delta3}
\end{equation}
All entries in these three matrices are strictly positive for \(\eta=\sqrt2\).
The tensor-product Bernstein basis is nonnegative and forms a partition of
unity on the square.  Therefore,
\begin{equation}
    \Delta_1>0,\qquad
    \Delta_2>0,\qquad
    \Delta_3>0
    \qquad
    \text{for all }(s,t)\in[-1,1]^2.
    \label{eq:lbchsh-first-three-minors}
\end{equation}

The determinant is
\begin{equation}
\begin{aligned}
    \Delta_4={}&
    -16s^2t^2+8s^2t+8st^2+3s^2-6st+3t^2 \nonumber\\
    &-8\eta s^2t-8\eta st^2+6\eta s^2+4\eta st+6\eta t^2.  
\end{aligned}
\label{eq:lbchsh-Delta4-st}
\end{equation}
With \(u=s+t\) and \(v=st\), this becomes
\begin{equation}
    \Delta_4
    =(3+6\eta)u^2+8(1-\eta)uv-(12+8\eta)v-16v^2.
    \label{eq:lbchsh-Delta4-uv}
\end{equation}
For \(s,t\in[-1,1]\) and fixed \(u\in[-2,2]\), every feasible value of \(v\) lies in the interval
\begin{equation}
    |u|-1\le v\le \frac{u^2}{4}.
    \label{eq:lbchsh-uv-feasible}
\end{equation}
The upper bound follows from \((s-t)^2\ge0\).  The lower bound follows from \((1-s)(1-t)\ge0\) and \((1+s)(1+t)\ge0\), which imply \(v\ge u-1\) and \(v\ge -u-1\).  Thus nonnegativity on this enclosing interval implies nonnegativity on the actual feasible set.

As a polynomial in \(v\), \Cref{eq:lbchsh-Delta4-uv} is concave, so its minimum over the enclosing interval \Cref{eq:lbchsh-uv-feasible} is attained at an endpoint.  At \(v=u^2/4\),
\begin{equation*}
    \left.\Delta_4\right|_{v=u^2/4}
    =-u^2(u-2)(u+2\eta)
    \ge0,
    \qquad -2\le u\le2.
\end{equation*}
At \(v=u-1\), for \(0\le u\le2\),
\begin{equation*}
    \left.\Delta_4\right|_{v=u-1}
    =-(5+2\eta)(u-2)
    \left(u-\frac{26-24\eta}{17}\right)
    \ge0.
\end{equation*}
At \(v=-u-1\), for \(-2\le u\le0\),
\begin{equation*}
    \left.\Delta_4\right|_{v=-u-1}
    =(14\eta-21)
    \left(u-\frac{2+4\eta}{7}\right)
    \left(u+\frac{22+12\eta}{7}\right)
    \ge0.
\end{equation*}
Here the signs use only \(\eta=\sqrt2\): \(26-24\eta<0\), \(14\eta-21<0\), \((2+4\eta)/7>0\), and \((22+12\eta)/7>2\).  Therefore \(\Delta_4\ge0\) on the whole square.

Since the first three leading minors are strictly positive, Gaussian elimination without row or column exchanges gives the no-pivot LDL factorization
\begin{equation}
    M=L\,
    \operatorname{diag}\!\left(
    \Delta_1,\,
    \frac{\Delta_2}{\Delta_1},\,
    \frac{\Delta_3}{\Delta_2},\,
    \frac{\Delta_4}{\Delta_3}
    \right)L^T,
    \label{eq:lbchsh-LDL}
\end{equation}
where \(L\) is real unit lower triangular and hence invertible.  The first three pivots are strictly positive, while the last pivot is nonnegative.  Congruence by the invertible matrix \(L\) therefore gives \(M(s,t)\succeq0\) for all \(s,t\in[-1,1]\).  This proves \(2\sqrt2 I-H\succeq0\), and therefore \Cref{lem:lbchsh-four-projection}.

\subsection{Achievability}

For \(m\ge2\), embed the usual binary CHSH strategy into the first two labels.  Alice and Bob share
\begin{equation*}
    |\Phi^+\rangle=\frac{|00\rangle+|11\rangle}{\sqrt2}.
\end{equation*}
Alice measures the Pauli \(Z\) observable for input \(0\) and the Pauli \(X\) observable for input \(1\).  Bob measures \((Z+X)/\sqrt2\) for input \(0\) and \((Z-X)/\sqrt2\) for input \(1\).  The \(+1\) eigenvalue is reported as output \(0\), and the \(-1\) eigenvalue is reported as output \(1\).  For each input, the projectors assigned to labels \(2,\ldots,m-1\) are zero.  Zero effects are allowed in a finite-outcome PVM, so this defines a valid \(m\)-output strategy.

The resulting binary correlators are
\begin{equation*}
    E_{00}=E_{01}=E_{10}=\frac1{\sqrt2},
    \qquad
    E_{11}=-\frac1{\sqrt2}.
\end{equation*}
Thus
\begin{equation*}
    p^{=}_{00}=p^{=}_{01}=p^{=}_{10}
    =\frac{1+1/\sqrt2}{2}
    =\cos^2\frac{\pi}{8},
    \qquad
    p^{=}_{11}
    =\frac{1-1/\sqrt2}{2}
    =\sin^2\frac{\pi}{8}.
\end{equation*}
Substituting into \Cref{eq:lbchsh-Q-def} gives \(Q_m=1+\sqrt2\).  Together with the upper bound \Cref{eq:lbchsh-operator-upper}, this proves \Cref{thm:multi-output-chsh}.

\section{Random CGLMP Network Instances}
\label[appendix]{app:cglmpgraph-instances}
\Cref{fig:cglmpgraph} specifies the two random CGLMP network games used in \Cref{tab:eval-unknown}. Parties are represented by nodes and CGLMP functionals by edges. The string $c_0c_1\cdots c_{\Nout-1}$ at an endpoint specifies its output relabeling $\pi(k)=c_k$ for that edge.

The left panel shows the star graph used for the $(4,2,4)$ game, in which party~2 is incident to all three CGLMP functionals. BGD attains the feasible value $q_\downarrow=0.54602$, which exceeds the exact classical value $1/2$ and therefore certifies a quantum advantage.

The right panel shows the path $1$--$2$--$3$--$0$ used for the $(4,2,6)$ game. BGD attains $q_\downarrow=0.58762>1/2$, likewise certifying a quantum advantage.

\begin{figure}[h!]
\centering
\includegraphics[width=\linewidth]{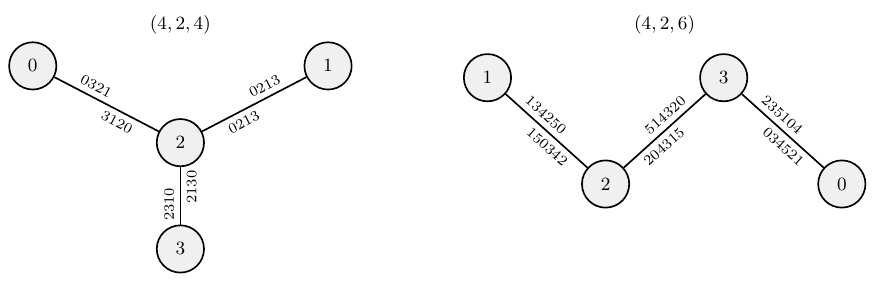}
\caption{The random CGLMP network games used in \Cref{tab:eval-unknown}, labeled by the game shape $(\Nparty,\Nin,\Nout)$: $(4,2,4)$ (left) and $(4,2,6)$ (right). The two endpoint strings on each edge give the output relabelings for its CGLMP functional; the string closer to a party specifies that party's permutation.}
\label{fig:cglmpgraph}
\end{figure}
\addtocontents{toc}{\protect\setcounter{tocdepth}{2}}
\end{document}